\newtheorem{theorem}{Theorem}
\newtheorem{lemma}[theorem]{Lemma}
\newtheorem{proposition}[theorem]{Proposition}
\newcommand{\mc}[1]{\mathcal{#1}}
\newcommand{\bra}[1]{\langle #1 |}
\newcommand{\ket}[1]{| #1 \rangle}
\newcommand{\op}[2]{|#1\rangle \langle #2|}
\newcommand{\ip}[2]{\langle #1|#2\rangle}
\newcommand{\proj}[1]{| #1 \rangle\!\langle #1 |}
\DeclareMathOperator{\tr}{Tr}
\DeclareMathOperator{\Sch}{Sch}
\begin{document}
\title{When do Local Operations and Classical Communication Suffice for Two-Qubit State Discrimination?}

\author{Eric~Chitambar,~Runyao~Duan,~Min-Hsiu~Hsieh
  \thanks{Eric Chitambar is with the Department of Physics and Astronomy at Southern Illinois University, Carbondale.  Runyao Duan and Min-Hsiu Hsieh are with Centre for Quantum Computation \& Intelligent Systems, Faculty of Engineering and Information Technology, University of Technology, Sydney, P.O. Box 123, Broadway NSW 2007, Australia.
  
This paper was accepted as a long talk at Asian Quantum Information Science (AQIS) conference, 2013.}}

\maketitle

\begin{abstract}
In this paper we consider the conditions under which a given ensemble of two-qubit states can be optimally distinguished by local operations and classical communication (LOCC).  We begin by completing the \emph{perfect} distinguishability problem of two-qubit ensembles - both for separable operations and LOCC - by providing necessary and sufficient conditions for the perfect discrimination of one pure and one mixed state.  Then for the well-known task of minimum error discrimination, it is shown that \textit{almost all} two-qubit ensembles consisting of three pure states cannot be optimally discriminated using LOCC.  This is surprising considering that \textit{any} two pure states can be distinguished optimally by LOCC.  Special attention is given to ensembles that lack entanglement, and we prove an easy sufficient condition for when a set of three product states cannot be optimally distinguished by LOCC, thus providing new examples of the phenomenon known as ``non-locality without entanglement.''  We next consider an example of $N$ parties who each share the same state but who are ignorant of its identity.  The state is drawn from the rotationally invariant ``trine ensemble,'' and we establish a tight connection between the $N$-copy ensemble and Shor's ``lifted'' single-copy ensemble.  For any finite $N$, we prove that optimal identification of the states cannot be achieved by LOCC; however as $N\to\infty$, LOCC can indeed discriminate the states optimally.  This is the first result of its kind.  Finally, we turn to the task of unambiguous discrimination and derive new lower bounds on the LOCC inconclusive probability for symmetric states. When applied to the double trine ensemble, this leads to a rather different distinguishability character than when the minimum-error probability is considered.
\end{abstract}

\begin{IEEEkeywords}
LOCC, state discrimination, nonlocality without entanglement, trine ensembles
\end{IEEEkeywords}

\section{Introduction}

The ability to distinguish one physical configuration from another lies at the heart of information theory.  When quantum systems are used for information transmission, messages are encoded into quantum states, and the processing of this information in a faithful manner requires the encoded states to be distinguishable from one another.  Hence, a fundamental topic in quantum information is the problem of \textit{state discrimination}, which investigates how well ensembles of quantum states can be distinguished under various physical conditions.

One important operational setting in which questions of distinguishability emerge is the so-called ``distant lab'' scenario.  Here, some multiparty quantum state is distributed to spatially separated quantum labs, and the various parties use local measurements combined with classical communication to try and identify their state.  This operational setting is also known as LOCC (Local Operations and Classical Communication), and the study of LOCC operations has played an important role in developing our understanding of not only quantum information processing, but also the nature of quantum entanglement itself.  For instance, as demonstrated by the fundamental task of quantum teleportation \cite{Bennett-1993a}, viewing quantum communication in the LOCC setting allows us to cleanly separate entangled bits (ebits), qubits, and classical bits as distinct resources that can be used for transmitting information between different parties.  Furthermore, the celebrated tasks of quantum key distribution \cite{Bennett-1984a} and entanglement distillation \cite{Bennett-1996a} are all procedures performed within the LOCC paradigm.  Yet at the same time, LOCC operations can be viewed as a more basic concept than quantum entanglement since a multipartite quantum state possesses entanglement if and only if this state cannot be generated by LOCC operations \cite{Werner-1989a, Popescu-1997a, Plenio-2007a}.

As LOCC operations are just a subset of all possible physical operations, certain state discrimination tasks become impossible when the distant-lab constraint is imposed.  For instance, it is well-known that a set of quantum states can be perfectly distinguished if and only if the states are orthogonal.  For multi-party states, this statement is still true; however, the measurement used to discriminate the states may need to be a \textit{global} measurement that coherently acts across all the subsystems.  In many cases, this global measurement cannot be implemented locally, thus making the states indistinguishable by LOCC unless some identification error occurs (examples can be found in Refs. \cite{Ghosh-2001a, Ghosh-2004a, Walgate-2002a, Horodecki-2003a, Nathanson-2005a,  Watrous-2005b, Feng-2009a, Yu-2010a, Duan-2010a, Cosentino-2013a}).  Such a limitation allows for the implementation of important information-theoretic objectives such as data hiding \cite{Terhal-2001a, DiVincenzo-2002a} and secret sharing \cite{Hillery-1999a, Gottesman-2000a}.  For more general sets of states (possibly non-orthogonal), one can quantify their distinguishability using a variety of different measures, and in this paper, we consider both the \textit{minimum error} guessing probability and the \textit{maximum conclusive} (or unambiguous) probability for a given ensemble.  Both of these figures of merit are given in terms of some success probability that has been optimized over all possible measurements.  When an LOCC measurement can obtain the same success probability as the global optimal, then we say that LOCC is able to optimally distinguish the ensemble with respect to the particular figure of merit, otherwise it cannot.  The underlying question studied in this paper is when it's possible for LOCC to perform optimal state discrimination.  

In general, this question is quite difficult due to the complexity of LOCC: the global communication among the parties enables the choice of local measurement by one party at one particular round to depend on the measurement outcomes of all the other parties in previous rounds.  It is often helpful to visualize a general LOCC operation as a tree where each node indicates a particular choice of local measurement and each branch corresponds to a particular sequence of measurement outcomes.  Deciding whether or not a certain discrimination task is feasible by LOCC therefore amounts to a consideration of all such possible trees.

Despite its complexity, partial progress has been made in understanding conditions in which LOCC can perform optimal state discrimination.  Most notably is the discovery that \textit{any} two orthogonal pure states can be perfectly distinguished using LOCC \cite{Walgate-2000a}.  A similar result holds for pairs of non-orthogonal states in which again, LOCC can obtain the optimal discrimination success probability that is physically possible; this is true for both minimum error discrimination \cite{Virmani-2001a} and optimal conclusive discrimination \cite{Chen-2002a, Ji-2005b}.  This finding is particularly relevant to the current paper since we will show that, in sharp contrast, almost all triples of two-qubit states \textit{cannot} be optimally distinguished by LOCC.  

The fact that non-LOCC measurements can distinguish certain ensembles better than any LOCC strategy may not be overly surprising when the ensemble states possess entanglement.  This is because entanglement embodies some non-local property of two or more systems, and thus a global measurement across all systems is needed in general to discriminate among entangled states (this is essentially at the heart of superdense coding \cite{Bennett-1992b}).  However, rather surprisingly, certain ensembles exist consisting of unentangled states that cannot be distinguished optimally using LOCC \cite{Bennett-1999a}.  This phenomenon is often called ``nonlocality without entanglement,'' and it essentially reflects that fact that nonlocality and entanglement are two different physical properties of multipartite quantum systems.  Understanding the difference between the two is an important problem in quantum information science, and thus a main objective of this paper is to study, in particular, LOCC discrimination of ensembles that lack entanglement.

\subsubsection*{Summary of Results}

The body of this paper begins in Section \ref{Sect:2x2Perfect} with a return to the problem of perfect state discrimination among two-qubit orthogonal states.  While our primary interest is LOCC discrimination, we will also consider discrimination by the more general class of separable operations (SEP).  This problem has been solved for almost all types of ensembles, and we solve the missing piece of perfect discrimination between one pure state and one mixed state.  Interestingly, we find that SEP is more powerful than LOCC in the sense that certain two-state ensembles are distinguishable by SEP but not LOCC.  This result allows us to later construct in Section \ref{Sect:PureProd} examples of one pure product state and one (non-orthogonal) separable mixed state that cannot be optimally distinguished by LOCC.  Thus, we obtain a large class of two-state ensembles which demonstrate nonlocality without entanglement.

Section \ref{Sect:MinError} investigates the problem of minimum-error discrimination between linearly independent states.  However, we prove that this seemingly more general problem actually reduces to the problem of perfect discrimination of orthogonal states.  This reduction therefore allows us to apply the results of Section \ref{Sect:Main} toward the problem of minimum-error discrimination of non-orthogonal (linearly independent) states.  As a result, we obtain in \ref{Sect:AlmostAll} our main result that almost any three states cannot be optimally distinguished by LOCC.  More precisely, if we select a three-state ensemble by randomly choosing our states, then almost surely will LOCC fail to discriminate them as successfully as a more general global measurement.  Sections \ref{Sect:3states}--\ref{Sect:Ncopy} then restrict attention to ensembles composed of unentangled states.  We are able to obtain a simple necessary condition for when three product states cannot be distinguished optimally by LOCC.  With this result, new examples of nonlocality without entanglement can easily be constructed.

In Section \ref{Sect:Ncopy}, we move beyond two-qubit ensembles and consider the optimal discrimination of three symmetric $N$-qubit states.  The specific ensemble we analyze is the $N$-copy generalization of the celebrated double trine ensemble \cite{Peres-1991a}.  We prove that for any finite $N$, the ensemble cannot be optimally discriminated using $N$-party LOCC.  However as $N\to\infty$, we give a protocol that indeed achieves optimal (perfect) discrimination.  This is quite different from the $N$-copy discrimination among two possible pure states which can always be accomplished optimally by LOCC \cite{Acin-2005a}.

Finally, in Section \ref{Sect:Unambig}, we consider the task of unambiguous discrimination by LOCC.  We derive new upper bounds on the LOCC success probability for ensembles of two-qubit symmetric states.  With this, simple examples can be found when LOCC is insufficient for optimal unambiguous discrimination.  We again consider the double trine and find that surprisingly, separable operations and LOCC operations perform equally well in the task of unambiguous discrimination, a completely different behaviour than when the figure of merit is the minimum error probability \cite{Chitambar-2013b}.   Before getting to all of these results, we first review some basic definitions and describe essential concepts for our investigation.

\section{Definitions and Notation} 

\label{Sect:Defn}

Let $\mc{H}=\mc{H}_1\otimes...\otimes\mc{H}_N$ denote the underlying Hilbert space for an $N$-partite quantum system.  Here, $\mc{H}_{k}$ is the local system of party $k$ having dimension $d_k$ so that $d=\prod_{i=1}^Nd_i$ is the total dimension.  The set of bounded linear operators acting on $\mathcal{H}$ will be denoted by $\mathcal{B}(\mc{H})$ and $\mathbb{I}_d$ is the identity element in $\mathcal{B}(\mc{H})$.  The (one-copy) \textit{quantum state discrimination problem} involves the task of correctly identifying a quantum state that is randomly sampled from an ensemble $\mathcal{E}=\{\rho_i,p_i\}_{i=1}^n$, where $\rho_i\in\mathcal{B}(\mc{H})$ and $p_i$ is the probability of obtaining $\rho_i$.  The ``which state'' classical information is extracted from the sampled state using a positive operator-valued measure (POVM), which is a collection of positive semidefinite operators $\Pi=\{\Pi_i\}_{i=1}^n$ acting on $\mc{B}(\mc{H})$ such that $\sum_{i=1}^n\Pi_i=\mathbb{I}_d$.  The total identification success probability of the POVM $\Pi$ is $\Pi(\mathcal{E}):=\sum_{i=1}^np_i \tr[\Pi_i\rho_i]$, and the \textit{minimum error probability} is given by
\begin{equation}
P_{err}(\mathcal{E})=\min_{\Pi}\quad (1- \Pi(\mathcal{E})).
\end{equation}
Here the minimization is taken over all $n$-outcome POVMs, and a minimum can indeed be obtained since the set of POVMs is compact.  

For the task of unambiguous discrimination, an extra outcome $\Pi_0$ is appended to the set of POVMs, and an additional constraint must be satisfied that $\tr[\Pi_i\rho_j]=0$ whenever $i\not=j$.  Under this condition, no error will ever be made when guessing the state; however, the outcome ``0'' represents an inconclusive outcome and no guess is made on the state's identity.  The \textit{minimum inconclusive probability} is thus given by the following
\begin{align}
P_{inc}(\mathcal{E})=\min_{\Pi} \quad& \sum_{i=1}^n \tr[\Pi_0\rho_i]\notag\\
s.t.\quad& \tr[\Pi_i\rho_j]=0\quad i\not=j>0.
\end{align}
This time, the minimization is taken over all $(n+1)$-outcome POVMs.

We say the POVM $\Pi$ is \emph{separable} (SEP) if for each $i$, $\frac{\Pi_i}{\tr(\Pi_i)}$ can be expressed as a convex sum of product projectors $\op{a_1}{a_1}\otimes...\otimes\op{a_N}{a_N}$, where $\op{a_k}{a_k}\in\mc{B}(\mc{H}_k)$.  The main interest in studying separable POVMs is that any LOCC POVM is a separable POVM \cite{Bennett-1999a}, and therefore SEP offers a useful approximation to LOCC.  However, there exists non-LOCC operations that are nevertheless separable.  Since the POVM elements of SEP contain no entanglement, studying these non-LOCC separable operations provide one way to understand the subtle difference between entanglement and nonlocality.

A general LOCC POVM is very complex and fortunately we will not need a precise characterization of them \cite{Chitambar-2012c}.  Roughly speaking an LOCC protocol consists of successive local measurements which can each be described by a set of Kraus operators $\{M_\lambda^{(k)}\otimes\mathbb{I}^{(\overline{k})}\}_\lambda$ where $M_\lambda^{(k)}\in\mc{B}(\mc{H}_k)$ and $\sum_\lambda (M_\lambda^{(k)})^\dagger M_\lambda^{(k)}=\mathbb{I}_k$.  The notation reflects that party $k$ performs a measurement while the other parties act trivially.  The measurement outcome $\lambda$ is announced to all the parties and some other party chooses a local measurement to perform based on the information $\lambda$; this process continues round after round.  For the state discrimination problem, the parties assign each state to a collection of possible measurement outcome sequences, and they guess the state's identity based on this assignment and what they happen to measure.

For most of this paper, we will only consider two-qubit systems.  In such small dimensions, problems of distinguishability become tractable, yet there is still enough degrees of freedom for interesting phenomenon to emerge.  A general $2\otimes 2$ pure state $\ket{\psi}$ can be uniquely represented by the $2\times 2$ matrix $\psi$ given by $\ket{\psi}=\mathbb{I}\otimes\psi\ket{\Psi^+}$, where $\ket{\Psi^+}$ is the maximally entangled state $\ket{\Psi^+}=\sqrt{1/2}(\ket{00}+\ket{11})$.  One measure of the entanglement possessed by $\ket{\psi}$ is its \textit{concurrence}, which is defined by $C(\psi)=|\det(\psi)|$ \cite{Wootters-1998a}.

The entanglement of $2\otimes 2$ states has an additional feature of being completely detectable by the positive partial transpose (PPT) criterion.  For a bipartite matrix $M$, we let $M^\Gamma$ denote its partial transpose, which is defined by performing the transpose operation on Bob's system alone with respect to some fixed basis.  A two-qubit density matrix is separable (i.e. does not possess entanglement) if and only if its partial transpose has no negative eigenvalues \cite{Horodecki-1996a}. 

A useful fact about two-qubit ensembles is that for any three orthogonal states, there exists a unique state orthogonal to all three.  Thus, for a given state $\ket{\Phi}$, we will let $\{\ket{\Phi}\}^\perp$ denote the orthogonal complement of $\ket{\Phi}$.  Another important property of two-qubit spaces is given by the first of the following two technical lemmas.  Both of these will be used heavily in Section \ref{Sect:2x2Perfect}, and they each refer to ``antiparallel'' eigenvalues, which are any two complex eigenvalues $z_1$ and $z_2$ related by $z_1=az_2$ for some negative real number $a$.
\begin{lemma}
\label{Lem:MixedSep}
Suppose $S$ is a two-dimensional subspace such that the subspace projector $P_S$ is separable.  Then $S$ has an orthonormal product basis.  Furthermore, if $\ket{\psi}$ and $\ket{\Phi}$ are any two entangled orthogonal states in $S^\perp$, then (i) $\psi\Phi^{-1}$ have two anti-parallel eigenvalues, and  (ii) $C(\psi)=C(\Phi)$.
\end{lemma}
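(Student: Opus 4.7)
The plan is to establish Part 1 first and then read off (i) and (ii) by a direct calculation in a convenient local frame provided by the product basis.

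For Part 1, begin with a separable decomposition $P_S = \sum_i p_i\op{a_ib_i}{a_ib_i}$ with $p_i>0$. Since the range of each summand sits inside the range of the sum, every $\ket{a_ib_i}$ lies in $S$, so $S$ is spanned by product vectors. Two geometric cases arise. If the projective line $\mathbb{P}(S)$ is contained in a ruling of the Segre quadric of product states, then $S$ is of the local form $\ket{e}\otimes\mc{H}_2$ or $\mc{H}_1\otimes\ket{f}$, both of which manifestly carry an orthonormal product basis. Otherwise $\mathbb{P}(S)$ meets the Segre quadric transversally in at most two distinct points; the tangential (one-point, multiplicity two) case is excluded because it would confine the separable decomposition of $P_S$ to a single product ray, giving rank one instead of rank two. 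Hence $S$ contains exactly two independent product rays $\ket{e_1f_1},\ket{e_2f_2}$, the separable decomposition collapses to $P_S=p_1\op{e_1f_1}{e_1f_1}+p_2\op{e_2f_2}{e_2f_2}$, and the projector identity $P_S^2=P_S$ combined with $\tr P_S=2$ forces $\ip{e_1f_1}{e_2f_2}=0$ and $p_1=p_2=1$.

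For Part 2, fix the product basis $\{\ket{e_1f_1},\ket{e_2f_2}\}$ supplied by Part 1. Its orthogonality implies $\ket{e_1}\perp\ket{e_2}$ or $\ket{f_1}\perp\ket{f_2}$. By the symmetry between Alice and Bob (which at the level of matrix reps sends $\psi\mapsto\psi^T$, preserving both the concurrence and the eigenvalues of $\psi\Phi^{-1}$) and by applying local unitaries (which act on $\psi$ by one-sided conjugation and thus also preserve these invariants), I may assume $\ket{e_1}=\ket{0}$, $\ket{e_2}=\ket{1}$. Then $S^\perp=\mathrm{span}\{\ket{0}\ket{f_1^\perp},\ket{1}\ket{f_2^\perp}\}$, and since $\ket{f_1^\perp},\ket{f_2^\perp}$ are linearly independent (otherwise $S$ would have been a local subspace), every entangled state in $S^\perp$ has both components in this basis nonzero. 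Writing $\ket{\psi}=a_0\ket{0}\ket{f_1^\perp}+a_1\ket{1}\ket{f_2^\perp}$ and $\ket{\Phi}=b_0\ket{0}\ket{f_1^\perp}+b_1\ket{1}\ket{f_2^\perp}$, the paper's matrix representation presents both $\psi$ and $\Phi$ in the common form $\sqrt 2\,F\,D$, with $F=[\ket{f_1^\perp}\;\ket{f_2^\perp}]$ and a diagonal factor $D$ holding the coefficients.

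Consequently $\psi\Phi^{-1}=F\,\mathrm{diag}(a_0/b_0,\,a_1/b_1)\,F^{-1}$ is similar to a diagonal matrix. The orthogonality relation $\bar a_0b_0+\bar a_1b_1=0$ rewrites as $a_0/a_1=-\bar b_1/\bar b_0$, so the ratio of the two eigenvalues equals $-|b_1|^2/|b_0|^2<0$, which is (i). Combined with $\sum|a_i|^2=\sum|b_i|^2=1$, this relation further yields $|a_0|=|b_1|$ and $|a_1|=|b_0|$, hence $C(\psi)=|\det\psi|=2|a_0a_1\det F|=2|b_0b_1\det F|=C(\Phi)$, giving (ii).

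The main obstacle is the structural step in Part 1: ruling out a tangential intersection of $\mathbb{P}(S)$ with the Segre quadric and then extracting orthogonality of the two product rays purely from idempotency of $P_S$. Once that local-frame structure is in hand, parts (i) and (ii) reduce to the short coefficient calculation above.
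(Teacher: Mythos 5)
Your proof is correct, and while its overall architecture matches the paper's (reduce $P_S$ to a sum of two product projectors, extract orthonormality from idempotency, then do a coefficient computation in the induced basis of $S^\perp$), it differs in two substantive ways. For the first part, the paper simply cites \cite{Sanpera-1998a} for the fact that a rank-two separable two-qubit operator is a mixture of exactly two product projectors, whereas you re-derive this from scratch by intersecting the line $\mathbb{P}(S)$ with the Segre quadric, disposing of the ruling case (a local subspace, which trivially has an orthonormal product basis and whose orthocomplement contains no entangled states) and the tangent case (a single product ray, contradicting rank two); this makes Part 1 self-contained at the cost of some projective geometry. For the second part, your treatment is actually more general than the paper's write-up: the paper asserts that any two orthogonal entangled states of $S^\perp$ take the form $\cos\theta\ket{01}+e^{i\varphi}\sin\theta\ket{10}$ and $-\sin\theta\ket{01}+e^{i\varphi}\cos\theta\ket{10}$, which, read literally, presumes $S^\perp=\mathrm{span}\{\ket{01},\ket{10}\}$, i.e.\ a product basis orthogonal on \emph{both} local factors. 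That need not hold: for $S=\mathrm{span}\{\ket{0}\ket{0},\ket{1}\ket{+}\}$ the projector is separable, yet the only product states of $S^\perp$ are $\ket{0}\ket{1}$ and $\ket{1}\ket{-}$, whose Bob factors are not orthogonal, so no such normal form exists there. Your computation with $\psi=\sqrt{2}\,F D$, where $F$ is an invertible but possibly non-unitary Bob frame, covers exactly this situation: the eigenvalues of $\psi\Phi^{-1}$ are those of $D_aD_b^{-1}$ regardless of $F$, and the common factor $\det F$ cancels in the concurrence comparison, so (i) and (ii) follow in full generality. In short, your argument buys a citation-free Part 1 and closes a case that the paper's literal normal form glosses over. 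One small presentational point: when $\ket{f_1^\perp}$ and $\ket{f_2^\perp}$ are linearly dependent, $S$ is local and $S^\perp$ contains only product states, so the claim is vacuous rather than contradicted; it would be cleaner to state that explicitly instead of treating "local subspace" as something to be ruled out.
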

\begin{proof}
As $P_S$ is separable of rank two, it can be written as $P_S=x\proj{ab}+y\proj{cd}$, where $x,y$ are positive numbers and
$\ket{ab},\ket{cd}$ are two linearly independent product states spanning $S$ \cite{Sanpera-1998a}.  This linear independence ensures that the operators \[\{\op{ab}{ab},\op{ab}{cd},\op{cd}{ab},\op{cd}{cd}\}\] are also linearly independent.  Therefore, the condition $P_S^2=P_S$ implies that $\ket{ab}$ and $\ket{cd}$ are orthonormal.  To prove the final assertion, note that when $S$ has an orthonormal product basis, so does $S^\perp$.  Either $S^\perp$ only contains product states, in which case the lemma is trivially satisfied, or any two entangled orthogonal states in $S^\perp$ will take the form $\ket{\psi}=\cos\theta\ket{01}+e^{i\varphi}\sin\theta\ket{10}$ and $\ket{\Phi}=-\sin\theta\ket{01}+e^{i\varphi}\cos\theta\ket{10}$.  It can immediately be seen that these states have the same concurrence.  Finally, a simple calculation reveals that the eigenvalues of $\psi\Phi^{-1}$ are $-\tan\theta$ and $\cot\theta$, which satisfies part (i) of the lemma. 
\end{proof}
\begin{lemma}[Duan \textit{et al.} \cite{Duan-2007a}]
\label{Lem:Runyao}
Let $\ket{\psi}$ and $\ket{\Phi}$ be bipartite entangled pure states with
$\lambda \geq 0$. Then $\rho(\lambda)=\proj{\psi}+\lambda\proj{\Phi}$
is separable if and only if (i) $\psi\Phi^{-1}$ has two antiparallel
eigenvalues and (ii) $\lambda=C(\psi)/C(\Phi)$.
\end{lemma}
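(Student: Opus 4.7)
The plan is to apply the partial-transpose (PPT) criterion (which is necessary and sufficient for separability on $\mathbb{C}^2\otimes\mathbb{C}^2$) and reduce the entire question to a single determinant computation.  Using the identities $(\proj{\Psi^+})^\Gamma = \tfrac{1}{2}F$ (with $F$ the swap operator) and $(\proj{\psi})^\Gamma = \tfrac{1}{2}(\psi^T\otimes\psi^*)F$, one writes
\[
\rho(\lambda)^\Gamma = \tfrac{1}{2}\bigl[(\psi^T\otimes\psi^*)+\lambda(\Phi^T\otimes\Phi^*)\bigr]F .
\]
Factoring $\Phi^T\otimes\Phi^*$ out of the bracket and using $(\Phi^T\otimes\Phi^*)^{-1}(\psi^T\otimes\psi^*) = (\psi\Phi^{-1})^T\otimes(\Phi^{-1}\psi)^*$, together with $\det F = -1$ and $\det(\Phi^T\otimes\Phi^*)=C(\Phi)^4$, the eigenvalues of the factored-out operator become the products $z_i\bar z_j$, where $z_1,z_2$ are the (common) eigenvalues of $\psi\Phi^{-1}$ and $\Phi^{-1}\psi$.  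Hence
\[
\det \rho(\lambda)^\Gamma = -\frac{C(\Phi)^4}{16}\bigl(|z_1|^2+\lambda\bigr)\bigl(|z_2|^2+\lambda\bigr)\bigl|z_1\bar z_2+\lambda\bigr|^2 \,\le\, 0 .
\]

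For the forward direction, separability forces $\rho(\lambda)^\Gamma\succeq 0$ and hence $\det \rho(\lambda)^\Gamma \geq 0$, so the above inequality must be tight.  Because $\psi$ and $\Phi$ are entangled (so $z_1 z_2\ne 0$ and $C(\Phi)\ne 0$) and $\lambda\ge 0$, tightness requires precisely $z_1\bar z_2 = -\lambda$.  This forces $z_1 / z_2$ to be a negative real, which is condition (i), and matching magnitudes yields $\lambda = |z_1||z_2| = |\det(\psi\Phi^{-1})| = C(\psi)/C(\Phi)$, which is condition (ii).

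For the converse I would construct an explicit product-state decomposition via a two-step invertible local (SLOCC) reduction.  By (i) the eigenvalues $z, -az$ of $M=\psi\Phi^{-1}$ are distinct (since $a>0$ and $z\ne 0$), so $M$ is diagonalizable.  The first SLOCC $\mathbb{I}\otimes\Phi^{-1}$ sends $|\Phi\rangle\mapsto|\Psi^+\rangle$ and replaces $|\psi\rangle$ by $|\widetilde\psi\rangle$ with matrix $\widetilde\psi=\Phi^{-1}\psi$, which is similar to $M$; a second SLOCC of the form $Q^T\otimes Q^{-1}$, built from the eigenvector matrix $Q$ of $\widetilde\psi$, leaves $|\Psi^+\rangle$ invariant and maps $\widetilde\psi$ to $D=\mathrm{diag}(z,-az)$.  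With $\lambda=a|z|^2$ from (ii), the contributions of $\op{00}{11}$ and $\op{11}{00}$ coming from $\proj{\widetilde\psi}$ and from $\lambda\proj{\Psi^+}$ cancel exactly, leaving
\[
\widetilde\rho = \tfrac{|z|^2(1+a)}{2}\bigl(\proj{00}+a\proj{11}\bigr),
\]
a manifest mixture of product projectors.  Undoing the two SLOCC maps (invertible local operations preserve separability) exhibits $\rho(\lambda)$ itself as a mixture of product projectors.

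The main technical obstacle is the partial-transpose bookkeeping that makes $\det \rho(\lambda)^\Gamma$ factor through the spectrum of $\psi\Phi^{-1}$; once this spectral form is in hand, the entire lemma collapses to the single sign constraint $z_1\bar z_2+\lambda = 0$, which encapsulates both (i) and (ii).
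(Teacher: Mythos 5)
Your proof is correct, and it is necessarily a different route from the paper's, because the paper supplies no argument at all for this lemma --- its ``proof'' is just the citation to \cite{Duan-2007a}. Your derivation is self-contained, and the key steps all check out: $(\proj{\Psi^+})^\Gamma=\tfrac{1}{2}F$ gives $(\proj{\psi})^\Gamma=\tfrac{1}{2}(\psi^T\otimes\psi^*)F$; $\det F=-1$ and $\det(\Phi^T\otimes\Phi^*)=C(\Phi)^4$; and since $\Phi^{-1}\psi$ is similar to $\psi\Phi^{-1}$, the spectrum of $(\psi\Phi^{-1})^T\otimes(\Phi^{-1}\psi)^*$ is $\{z_i\bar{z}_j\}$, so that
\[
\det\rho(\lambda)^\Gamma=-\tfrac{1}{16}\,C(\Phi)^4\,(|z_1|^2+\lambda)(|z_2|^2+\lambda)\,\bigl|z_1\bar{z}_2+\lambda\bigr|^2\;\leq\;0 .
\]
Because entanglement of $\ket{\psi},\ket{\Phi}$ forces $z_1z_2\neq 0$ and $C(\Phi)>0$, Peres necessity (separable $\Rightarrow\rho^\Gamma\geq 0\Rightarrow\det\rho^\Gamma\geq 0$) pins down $z_1\bar{z}_2=-\lambda$, which is exactly (i) together with (ii); the case $\lambda=0$ is vacuous since it makes this equation unsatisfiable, consistent with $\rho(0)=\proj{\psi}$ being entangled. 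The converse also verifies by direct computation: with $\lambda=a|z|^2$, which indeed equals $|\det(\psi\Phi^{-1})|=C(\psi)/C(\Phi)$, the $\op{00}{11}$ and $\op{11}{00}$ contributions cancel exactly, leaving the manifestly separable $\tfrac{1}{2}|z|^2(1+a)\bigl(\proj{00}+a\proj{11}\bigr)$, and conjugation by invertible local operators preserves separability in both directions. Two remarks. First, your opening line invokes PPT as ``necessary and sufficient,'' but you only ever use necessity --- and rightly so: $\det\rho^\Gamma=0$ alone does not certify $\rho^\Gamma\geq 0$ (a determinant controls only the product of the eigenvalues, not their signs), so closing the converse through PPT would require the much less standard two-qubit fact that separability is equivalent to $\det\rho^\Gamma\geq 0$; your explicit SLOCC reduction is precisely what keeps the converse elementary, and it is the right move. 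Second, a pleasant byproduct of your two directions combined is that, for the family $\rho(\lambda)$, PPT, separability, and conditions (i)--(ii) are all equivalent.
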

\begin{proof}
See Ref. \cite{Duan-2007a}.
\end{proof}

\section{Completing the $2\otimes 2$ Perfect Discrimination Picture}

\label{Sect:2x2Perfect}

We begin by providing a full solution to the perfect distinguishability problem of two-qubit ensembles, both for separable operations and LOCC.  Note that this problem has been heavily studied, so much of our work here will be a recollection of previously known results.  However, we do provide simpler necessary and sufficient conditions for some occasions that are easier to use than those previous results. As perfect distinguishability requires orthogonality of the states, there are only a few possible types of ensembles to consider.  Specifically, we can classify the ensembles $\{\rho_i\}_{i=1}^n$ according to the ranks of their states, and so any ensemble of two-qubit states that can be perfectly distinguished belongs to one of the following classes: $\{1,1\}$, $\{1,1,1\}$, $\{1,1,1,1\}$, $\{1,2\}$, $\{1,3\}$, $\{1,1,2\}$ and $\{2,2\}$.  For instance, any ensemble of the type $\{1,1,2\}$ consists of three orthogonal states with respective ranks $1$, $1$, and $2$.  

In principle, the LOCC problem has been completely solved by the following lemma from Ref. \cite{Duan-2010a}.  In it, the notation $\Sch_\perp(\rho)$ denotes the minimal number of orthogonal product states whose linear span contains the support of $\rho$.
\begin{proposition}[\cite{Duan-2010a}]
\label{Prop:RunyaoMain}
A set of orthogonal two-qubit states $\{\rho_1,...\rho_n\}$ can be perfectly distinguished using LOCC if and only if 
\[
\sum_{i=1}^n \Sch_\perp(\rho_i)\leq 4.
\]
\end{proposition}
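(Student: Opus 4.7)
The plan is to prove the two implications separately, using as the organizing framework the classification of orthogonal two-qubit ensembles by rank signature listed just above the proposition. Because the Hilbert space is only four-dimensional, each signature admits only a small algebraic family, and the bound $\sum_i\Sch_\perp(\rho_i)\leq 4$ forces very strong structure in each case.

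\textbf{Sufficiency.} I would split on the signature. When the bound forces every support to admit an orthogonal product basis (as happens, for instance, in $\{1,1,1,1\}$, $\{2,2\}$, and $\{1,1,2\}$), the resulting at-most-four mutually orthogonal product vectors in $\mathbb{C}^2\otimes\mathbb{C}^2$ must together form a product basis $\{|a_i\rangle|b_j\rangle\}_{i,j=1,2}$: this follows from an elementary $K_4$ pigeonhole argument applied to the fact that $\langle a_ib_i|a_jb_j\rangle=0$ forces $a_i\perp a_j$ or $b_i\perp b_j$. The LOCC protocol is then the non-adaptive product-basis measurement. In the remaining cases some $\rho_i$ has $\Sch_\perp$ strictly greater than its rank, and a single product basis need not suffice (e.g.\ $\{\proj{\Phi^+},\proj{\Phi^-}\}$); here I would invoke the adaptive protocol of \cite{Walgate-2000a} for pairs of orthogonal pure states and extend it to the mixed-state settings using Lemmas \ref{Lem:MixedSep}--\ref{Lem:Runyao}, which encode enough product structure in the supports to guarantee that Alice can pick a local basis making Bob's conditional states orthogonal.

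\textbf{Necessity.} I would induct on the depth of the LOCC tree. For the inductive step, consider the first nontrivial local measurement, say Alice's, with Kraus operators $\{A_\lambda\otimes\mathbb{I}\}$; for each outcome $\lambda$ the post-measurement ensemble $\{(A_\lambda\otimes\mathbb{I})\rho_i(A_\lambda^\dagger\otimes\mathbb{I})\}$ remains orthogonal and perfectly LOCC-distinguishable, so by the inductive hypothesis satisfies $\sum_i\Sch_\perp\leq 4$. Invertible $A_\lambda$ are harmless because they preserve $\Sch_\perp$ on Alice's side and the bound lifts trivially. The delicate case is a rank-$1$ projection $A_\lambda=\op{\phi'}{\phi}$, which collapses Alice's qubit to $|\phi'\rangle$ and can artificially erase product structure in the original ensemble; here I would use Lemma \ref{Lem:Runyao}, together with the requirement that \emph{every} outcome of the full measurement $\{A_\lambda\}_\lambda$ yields an orthogonal post-measurement ensemble, to conclude that the original ensemble already satisfied the bound. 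This lifting step is the principal obstacle, since $\Sch_\perp$ is not monotone in any clean way under rank-$1$ local filtering; the cleanest route is to carry out the inductive step case-by-case through the seven rank signatures, where each case reduces to a short argument built on Lemmas \ref{Lem:MixedSep}--\ref{Lem:Runyao}.
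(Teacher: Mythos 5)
First, a point of orientation: the paper itself contains \emph{no} proof of this proposition --- it is imported verbatim from Ref.~\cite{Duan-2010a} ("In principle, the LOCC problem has been completely solved by the following lemma from Ref.~\cite{Duan-2010a}") --- so your attempt must stand on its own, and it does not. The fatal gap is in the necessity direction, which is precisely the nontrivial half of the theorem. Your induction hinges on lifting the bound $\sum_i\Sch_\perp(\rho_i)\leq 4$ from the post-measurement ensembles back to the original one, and you dispose of the invertible case by asserting that invertible $A_\lambda$ preserve $\Sch_\perp$. That assertion is false. Consider the subspace $S=\mathrm{span}\{\ket{0}\ket{0},\ket{+}\ket{+}\}$: writing a general element as a $2\times2$ coefficient matrix and taking the determinant shows the only product states in $S$ (up to scale) are $\ket{0}\ket{0}$ and $\ket{+}\ket{+}$, which are not orthogonal, so any rank-two state supported on $S$ has $\Sch_\perp=3$; yet the invertible filter $A\otimes\mathbb{I}$ with $A\ket{0}=\ket{0}$, $A\ket{+}\propto\ket{1}$ carries $S$ onto $\mathrm{span}\{\ket{0}\ket{0},\ket{1}\ket{+}\}$, which has an orthogonal product basis, i.e.\ $\Sch_\perp=2$. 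Thus $\Sch_\perp$ moves in both directions under invertible local filtering, and nothing ``lifts trivially.'' For the rank-one case you explicitly concede the step (``here I would use Lemma~\ref{Lem:Runyao} \ldots to conclude'' is a plan, not an argument), and note additionally that rank-one Kraus operators change the rank signature itself, so the proposed case-by-case induction cannot even be confined to a fixed signature class. As written, the hard half of the theorem is simply absent.

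The sufficiency direction contains a concrete but repairable error. Four mutually orthogonal product states in $2\otimes2$ need \emph{not} form a product of local bases $\{\ket{a_i}\ket{b_j}\}_{i,j=1,2}$: the set $\{\ket{0}\ket{0},\ket{0}\ket{1},\ket{1}\ket{+},\ket{1}\ket{-}\}$ is a counterexample, and your $K_4$ pigeonhole cannot exclude it, since two-colourings of $K_4$ with no monochromatic triangle exist. The correct structure (cf.~\cite{Walgate-2002a}) is $\{\ket{a}\ket{b},\ket{a}\ket{b^\perp},\ket{a^\perp}\ket{c},\ket{a^\perp}\ket{c^\perp}\}$ up to exchanging the parties, and consequently no \emph{non-adaptive} product-basis measurement suffices --- for the counterexample, any fixed POVM of Bob's must confuse either the pair $\{\ket{0},\ket{1}\}$ or the pair $\{\ket{+},\ket{-}\}$ --- whereas the one-way \emph{adaptive} protocol (Alice measures $\{\ket{a},\ket{a^\perp}\}$ and Bob chooses his basis conditioned on her outcome) works. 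With that correction, and with the Walgate-style analysis that you only gesture at for the signatures containing entangled states (e.g.\ $\{1,2\}$ with entangled pure state), the sufficiency half could plausibly be completed; the necessity half needs a genuinely new idea, which is exactly the content of Ref.~\cite{Duan-2010a}.
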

\noindent The goal of this section is to reformulate this lemma in a more transparent form for each class of two-qubit ensembles.  Additionally, we would like to compare this with the conditions needed for perfect discrimination by SEP.  This latter question has been solved, either explicitly or implicitly, in Ref. \cite{Duan-2007a} for all cases except the $\{1,2\}$ ensemble, i.e ensemble of one pure state and one rank two mixed state.  For completion, we will quickly review the LOCC and SEP discrimination conditions class-by-class and then end with a treatment of the $\{1,2\}$ case, since this is the previously missing piece in the perfect distinguishability picture.  Proofs are given for the cases which may not have been explicitly addressed before.

\subsubsection*{Case $\{1,1\}$ \cite{Walgate-2000a}} LOCC (hence SEP) distinguishability is always possible.

\subsubsection*{Cases $\{1,1,1,1\}$, $\{1,3\}$, $\{1,1,2\}$ \cite{Walgate-2002a}, \cite{Duan-2007a}}  LOCC and SEP distinguishability are both possible if and only if (iff) all the pure states are product states.  

\begin{IEEEproof}
The $\{1,1,1,1\}$ case is solved in Ref. \cite{Walgate-2002a} for LOCC and in Ref. \cite{Duan-2007a} for SEP.  

For the $\{1,3\}$ case, let $\ket{\psi}$ be the pure state in the ensemble.  The necessity of $\ket{\psi}$ being product state follows from the fact that the SEP POVM element\footnote{This implies the necessity for perfect LOCC discrimination.} detecting $\ket{\psi}$ must be the projector $\op{\psi}{\psi}$.  Conversely, when $\ket{\psi}=\ket{a}\ket{b}$, a perfect LOCC discrimination scheme consists of Alice and Bob measuring in the respective bases $\{\ket{a},\ket{\overline{a}}\}$ and $\{\ket{b},\ket{\overline{b}}\}$.  

For the $\{1,1,2\}$ case, we denote the pure states by $\ket{\psi_1}$ and $\ket{\psi_2}$, and as before, the POVM detecting each of these states must be $\op{\psi_1}{\psi_1}$ and $\op{\psi_2}{\psi_2}$ respectively.  Hence, for a SEP POVM, we must have that $\ket{\psi_1}$ and $\ket{\psi_2}$ are both product states.  The POVM element $\Pi_\rho$ detecting the mixed state $\rho$ must be a separable projector onto the support of $\rho$.  Thus, by Lemma \ref{Lem:MixedSep}, $\Pi_\rho=\op{x}{x}+\op{y}{y}$ for orthonormal product states $\ket{x}$ and $\ket{y}$.  But since $\rho$ is orthogonal to $\ket{\psi_1}$ and $\ket{\psi_2}$, the set $\{\ket{\psi_1},\ket{\psi_2},\ket{x},\ket{y}\}$ consisting of orthogonal product states can only be of the form $\{\ket{a}\ket{b},\ket{a}\ket{\overline{b}},\ket{\overline{a}}\ket{b},\ket{\overline{b}}\ket{\overline{b}}\}$.  Hence, a perfect LOCC discrimination scheme again consists of Alice and Bob measuring in the respective bases $\{\ket{a},\ket{\overline{a}}\}$ and $\{\ket{b},\ket{\overline{b}}\}$.
\end{IEEEproof}

\subsubsection*{Case $\{2,2\}$} SEP and LOCC distinguishability are both possible iff the projectors onto the supports of each of the mixed states are separable.  

\begin{IEEEproof}
The argument uses Lemma \ref{Lem:MixedSep} and follows analogously to the {case $\{1,1,2\}$}. 
\end{IEEEproof}

\subsubsection*{Case $\{1,1,1\}$ (\cite{Walgate-2002a}, \cite{Duan-2007a}} LOCC distinguishability is possible iff two of the three states are product states.  For SEP, let $\{\ket{\psi_i}\}_{i=1}^3$ denote the ensemble of three orthogonal states and $\ket{\Phi}$ the state orthogonal to all of them.  Then the $\ket{\psi_i}$ are perfectly
distinguishable by separable operations iff: (i)
$\psi_k\Phi^{-1}$ has two antiparallel eigenvalues for each entangled
state $\psi_i$, and (ii) $\sum_{i=1}^3 C(\psi_i) = C(\Phi)$.

\begin{IEEEproof}
The LOCC condition is given in Ref. \cite{Walgate-2002a} while the SEP criterion is proven in Ref. \cite{Duan-2007a}.  Examples of ensembles are presented in Ref. \cite{Duan-2007a} that satisfy the SEP distinguishability conditions but not the LOCC conditions.  Thus, SEP is strictly more powerful than LOCC for two-qubit state discrimination.  
\end{IEEEproof}

\subsubsection*{Case $\{1,2\}$}  Let $\ket{\psi}$ be orthogonal to a rank-two state $\rho$ with $\ket{\Phi}$ being orthogonal to both.  Then $\ket{\psi}$ and $\rho$ are perfectly distinguishable by SEP iff either $\ket{\psi}$ is a product state, or the two conditions hold: (i) the matrix $\psi\Phi^{-1}$ has two
antiparallel eigenvalues and (ii) $C(\psi)\leq C(\Phi)$. In particular, when $\Phi$ is a maximally entangled state, any such
$\ket{\psi}$ and $\rho$ are perfectly distinguishable by SEP.  For LOCC, the states are perfectly distinguishable iff either $\ket{\psi}$ is a product state, or condition (i) is satisfied and equality holds for condition (ii).  

\begin{IEEEproof}
We first consider LOCC.  Suppose that $\ket{\psi}$ is entangled and $\ket{\psi}$ and $\rho$ are perfectly distinguishable.  Then $\Sch_\perp(\psi)=2$ and Proposition~\ref{Prop:RunyaoMain} requires that $\Sch_\perp(\rho)=2$.  Hence, the assumption of Lemma \ref{Lem:MixedSep} is satisfied on the support of $\rho$, and thus conditions (i) and (ii) are satisfied, with the latter being an equality. Conversely, if the two conditions hold with $C(\psi)=C(\Phi)>0$, then Lemma \ref{Lem:Runyao} combined with Lemma \ref{Lem:MixedSep} implies that the support of $\rho$ has an orthogonal product basis.  Thus $\Sch_\perp(\psi)+\Sch_\perp(\rho)\leq 4$ and so LOCC discrimination is possible.  On the other hand, if $\ket{\psi}$ is a product state, a perfect discrimination protocol follows as in the $\{1,3\}$ case.   

Moving to SEP, suppose that the two states can be perfectly distinguished by
separable POVM $\{E,\mathbb{I}-E\}$. From Theorem 1 of \cite{Duan-2007a}, we know
that the operator $E$ must have the form:
$E=\proj{\psi}+\lambda\proj{\Phi}$, where $\lambda\leq 1$. Then
invoking Lemma~\ref{Lem:Runyao} directly gives the two conditions.  Conversely, whenever we have $\ket{\psi}$ and $\ket{\Phi}$
satisfying (i) and (ii), we can construct a separable operation
$E=\proj{\psi}+\lambda\proj{\Phi}$, where $\lambda=C(\psi)/C(\Phi)$,
according to Lemma~\ref{Lem:Runyao}. It is not difficult to see that
$\{E,\mathbb{I}-E\}$ can perfectly distinguish the state $\ket{\psi}$ and $\rho$.  One final
step is to verify that $\mathbb{I}-E$ is also separable.  As $E$ is a separable non-negative operator with rank two, it can be decomposed into
$E=x\proj{ab}+y\proj{cd}$, where $x,y$ are positive numbers and
$\ket{ab},\ket{cd}$ are two product states \cite{Sanpera-1998a}.  Next, observe that $E$ and
$E^{\Gamma}$ are equivalent up to a local unitary of form $I_A\otimes
U_B$. The unitary $U_B$ can be defined through $U_B\ket{b}=\ket{b^*}$
and $U_B\ket{d}=e^{i\theta}\ket{d^*}$, where $\ket{b^*}$ and
$\ket{d^*}$ are complex conjugate defined according to any fixed
orthonormal basis. That such a $U_B$ exists follows from the preservation of inner product: $|\langle b|d\rangle |=|\langle
b^*|d^*\rangle|$. So we see that the partial transpose of $\mathbb{I}-E$ is
locally equivalent to $\mathbb{I}-E$ itself and hence must be non-negative. And in the $2\otimes
2$ case, PPT of $\mathbb{I}-E$ implies its separability.

Finally, if $\ket{\Phi}$ is a maximally entangled state, then its
$2\times 2$ matrix representation $\Phi$ is equal to some $2\times 2$
unitary $U$: $\Phi=U$. Notice that
\[
\tr[\psi\Phi^{-1}] = \tr [U^\dagger \psi] = 2\langle \Phi|\psi\rangle =0,
\]
we can conclude that $\psi\Phi^{-1}$ would have two antiparallel
eigenvalues since $\ket{\psi}$ is entangled. Furthermore, since
$C(\Phi)=1$, the maximal possible value for concurrence, ii) holds.
This completes our proof.
\end{IEEEproof}

\section{Minimum Error Discrimination}

\label{Sect:MinError}

\subsection{General Linearly Independent Ensembles}

\label{Sect:Main}

We begin our discussion by recalling some general facts about minimum error discrimination.  For an ensemble $\{\rho_i,p_i\}_{i=1}^n$, a POVM $\{\Pi_i\}_{i=1}^n$ is optimal in minimum error discrimination if and only if $\Lambda-p_j\rho_j\geq 0$ for all $\rho_j$, in which the operator $\Lambda:=\sum_{i=1}^np_i\Pi_i\rho_i$ is hermitian \cite{Holevo-1973a, Yuen-1975a, Barnett-2009a}.  Since $\sum_{i=1}^n\Pi_i=\mathbb{I}$, we have
\begin{equation}
0=\tr[\Lambda-\Lambda]=\sum_{j=1}^n \tr[\Pi_j(\Lambda-p_j\rho_j)]=\sum_{j=1}^n \tr[(\Lambda-p_j\rho_j)\Pi_j].
\end{equation}
As $\Lambda-p_j\rho_j\geq 0$ and 
\begin{align*}
\tr[\Pi_j(\Lambda-p_j\rho_j)]&=\tr[(\Lambda-p_j\rho_j)\Pi_j]  \\ &=\tr[\Pi^{1/2}_j(\Lambda-p_j\rho_j)\Pi^{1/2}_j]\geq 0,
\end{align*}
we, in fact, must have  
\begin{equation}
\label{Eq:optimal-conds2}
\Pi_j(\Lambda-p_j\rho_j)=(\Lambda-p_j\rho_j)\Pi_j=0.
\end{equation}

Using these fundamental properties of the optimal POVM we next generalize a result given in Ref. \cite{Chitambar-2013b}, which itself is based on work by Mochon \cite{Mochon-2006a}.  This proposition offers the key tool used throughout most of this section.  

\begin{proposition}
\label{Prop1}
Let $\mathcal{E}=\{\rho_i,p_i\}_{i=1}^n$ be an ensemble of linearly independent states; i.e. for spectral decompositions $\rho_i=\sum_{j=1}^{r_i}\lambda_{ij}\op{\psi_{ij}}{\psi_{ij}}$, the $\ket{\psi_{ij}}$ are linearly independent.  Let $S$ be the subspace spanned by the $\ket{\psi_{ij}}$, and let $P_{opt}$ be the optimal minimum error probability in discrimination.   Then there exists a unique decomposition of $S=S_1\oplus S_2\oplus...\oplus S_n$ with $S_i$ having dimension $r_i$ such that a POVM can obtain $P_{opt}$ on $\mathcal{E}$ if and only if it can perfectly distinguish the normalized subspace projectors $\tfrac{1}{r_1}\Upsilon_{S_1}$, $\tfrac{1}{r_2}\Upsilon_{S_2}$,...,$\tfrac{1}{r_n}\Upsilon_{S_n}$.
\end{proposition}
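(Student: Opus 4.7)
The plan is to use the optimality conditions~(\ref{Eq:optimal-conds2}) to locate, within the ensemble's support $S$, a canonical orthogonal decomposition $S=S_1\oplus\cdots\oplus S_n$ with $\dim S_j=r_j$ onto which every optimal POVM collapses. The $S_j$ will be identified as $S_j:=\ker(\Lambda-p_j\rho_j)\cap S$, and the iff statement will then reduce to two direct computations.

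First I would show that $\Lambda$ is invertible on $S$: any $v\in S$ with $\Lambda v=0$ would give $0=\langle v|\Lambda|v\rangle\geq p_j\langle v|\rho_j|v\rangle\geq 0$, forcing $\rho_jv=0$ for every $j$ and hence $v=0$ since the $\ket{\psi_{ij}}$ span $S$. Setting $Q_j:=P_S\Pi_jP_S$, the condition $\Pi_j(\Lambda-p_j\rho_j)=0$ places $\mathrm{range}(Q_j)\subseteq S_j$. To bound $\dim S_j$ I would conjugate on $S$ by $\Lambda^{-1/2}$: the operator $I_S-\Lambda^{-1/2}(p_j\rho_j)\Lambda^{-1/2}$ is PSD with kernel of dimension $\dim S_j$, and since $\Lambda^{-1/2}(p_j\rho_j)\Lambda^{-1/2}$ has rank $r_j$ with eigenvalues in $[0,1]$, at most $r_j$ of them can equal $1$; this gives $\dim S_j\leq r_j$. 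Linear independence supplies $\sum_j r_j=\dim S$, and combining $\mathrm{rank}(Q_j)\leq\dim S_j\leq r_j$ with $\sum_j\mathrm{rank}(Q_j)\geq\mathrm{rank}(P_S)=\dim S$ forces $\mathrm{rank}(Q_j)=\dim S_j=r_j$. The trace identity $\sum_j\tr(Q_j)=\dim S$ together with $\tr(Q_j)\leq r_j$ (eigenvalues of $Q_j\leq P_S$ lie in $[0,1]$) then yields $\tr(Q_j)=r_j$, so each $Q_j$ is a projector of rank $r_j$ with range $S_j$, i.e.\ $Q_j=\Upsilon_{S_j}$. Finally, $\sum_j\Upsilon_{S_j}=P_S$ being a sum of projectors equal to a projector forces mutual orthogonality of the $S_j$ via the standard argument $\sum_{j\ne k}\tr(\Upsilon_{S_j}\Upsilon_{S_k})=0$ extracted from squaring both sides.

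For the equivalence, perfect discrimination of $\{\tfrac{1}{r_j}\Upsilon_{S_j}\}$ by a POVM $\{\Pi'_j\}$ is equivalent to $P_S\Pi'_jP_S=\Upsilon_{S_j}$ for each $j$. Using $\Lambda v=p_j\rho_jv$ for $v\in S_j$, one obtains $\Upsilon_{S_j}p_j\rho_j\Upsilon_{S_j}=\Upsilon_{S_j}\Lambda\Upsilon_{S_j}$, and the success probability evaluates to $\sum_j p_j\tr(\Pi'_j\rho_j)=\sum_j\tr(\Upsilon_{S_j}\Lambda)=\tr(\Lambda)$, which matches the optimal success probability. Conversely, any POVM achieving $P_{opt}$ satisfies~(\ref{Eq:optimal-conds2}) with the same $\Lambda$, and the preceding dimension/rank argument forces $P_S\Pi'_jP_S=\Upsilon_{S_j}$. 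Uniqueness of the decomposition is inherited from uniqueness of the dual operator $\Lambda$ in the minimum-error SDP, which is standard for linearly independent ensembles.

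The main obstacle I anticipate lies in closing the rank/dimension inequalities and upgrading the intermediate PSD operators $Q_j$ to genuine orthogonal projectors: this is precisely where linear independence, the inequality $\Lambda\geq p_j\rho_j$, and the POVM completeness relation must all conspire. Once $Q_j=\Upsilon_{S_j}$ with mutually orthogonal $S_j$ is secured, the equivalence with perfect discrimination of the normalized projectors and the uniqueness claim follow by bookkeeping.
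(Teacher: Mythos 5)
Your proof is correct, and it takes a genuinely different route from the paper's. The paper works from the dual basis $\ket{\widetilde{\psi}_{ij}}$ of the linearly independent eigenvectors: starting from a particular optimal POVM, it shows the compressed elements $\widehat{\Pi}_j=\Upsilon_S\Pi_j\Upsilon_S$ map $\{\ket{\psi_{jk}}\}_k$ to a linearly independent set, then applies the optimality condition \eqref{Eq:optimal-conds2} to the dual vectors to conclude that $\widehat{\Pi}_i\ket{\psi_{ik}}$ is an eigenvector of $\widehat{\Pi}_j$ with eigenvalue $\delta_{ij}$; thus the subspace $S_j$ is defined as the span of $\{\widehat{\Pi}_j\ket{\psi_{jk}}\}_k$, i.e.\ from the POVM itself, and uniqueness then requires a separate convexity argument (two distinct optimal compressed POVMs would average to a non-projector). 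You instead define $S_j:=\ker(\Lambda-p_j\rho_j)\cap S$ intrinsically from the dual operator and recover $Q_j=\Upsilon_{S_j}$ by pure counting: $\dim S_j\le r_j$ via conjugation by $\Lambda^{-1/2}$, rank subadditivity to saturate $\sum_j r_j=\dim S$, and the trace argument to upgrade each $Q_j$ to a projector. This buys two things: the decomposition is manifestly the same for every optimal POVM, so the uniqueness claim comes essentially for free, and you never need dual-basis manipulations; the paper's proof, in exchange, is fully self-contained in that it never invokes uniqueness of the dual optimum. Two small patches you should make explicit. First, your claim $\mathrm{range}(Q_j)\subseteq S_j$ needs the observation that $\Lambda$ has range inside $S$: this follows from hermiticity, since $\Lambda=\Lambda^\dagger=\sum_i p_i\rho_i\Pi_i$, whence $\ker(\Lambda-p_j\rho_j)=S_j\oplus S^\perp$ and projecting $\mathrm{range}(\Pi_j)$ with $P_S$ indeed lands in $S_j$. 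Second, the uniqueness of $\Lambda$ that you call standard is true (and for all ensembles, not only linearly independent ones), but it deserves its two-line proof: for any primal optimal $\{\Pi_i\}$ and any dual optimal $Y$, strong duality and positivity force $(Y-p_i\rho_i)\Pi_i=0$ for every $i$, and summing over $i$ gives $Y=\sum_i p_i\rho_i\Pi_i$, so all dual optima coincide with all operators $\Lambda$ built from optimal POVMs. With these two lines added, your argument is complete.
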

\begin{proof}
Our argument proceeds like the one in Ref. \cite{Chitambar-2013b}.  As the $\ket{\psi_{ij}}$ are linearly independent, there exists a set of dual vectors $\ket{\widetilde{\psi}_{ij}}$ such that $\ip{\widetilde{\psi}_{ij}}{\psi_{k\ell}}=\delta_{ik}\delta_{j\ell}$.  Let $\{\Pi_i\}_{i=1}^n$ obtain $P_{opt}$ on $\mathcal{E}$.  Taking $\Upsilon_S$ to be the projector onto $S$ and defining $\widehat{\Pi}_j=\Upsilon_S\Pi_j\Upsilon_S$, it is obvious that the POVM $\{\widehat{\Pi}_j,\mathbb{I}-\Upsilon_S\}_{j=1}^n$ is also optimal for $\mathcal{E}$.  Thus the $\hat{\Pi}_j$ satisfy
\begin{equation}
\label{Eq:optimal-conds2-hat}
\widehat{\Pi}_j(\widehat{\Lambda}-p_j\rho_j)=(\widehat{\Lambda}-p_j\rho_j)\widehat{\Pi}_j=0,
\end{equation}
where $\widehat\Lambda=\Upsilon_S\Lambda \Upsilon_S$. 
We next observe that the vectors $\{\widehat{\Pi}_j\ket{\psi_{jk}}\}_{k=1}^{r_j}$ are linearly independent for each $j$.  For suppose that $\sum_{k=1}^{r_j}\alpha_k\widehat{\Pi}_j\ket{\psi_{jk}}=0$ for some nonzero $\alpha_k$.  Then we could contract both sides of $\widehat{\Lambda}-p_j\rho_j\geq 0$ with the vector $\ket{\varphi_j}=\sum_{k=1}^{r_j}\frac{\alpha_k}{p_j\lambda_{jk}}\ket{\widetilde{\psi}_{jk}}$ to obtain
  \[
  0\leq \bra{\varphi_j}  \left(\widehat{\Lambda}-p_j {\sum}_{k=1}^{r_j}\lambda_{jk}\op{\psi_{jk}}{\psi_{jk}}\right)\ket{\varphi_j}= -\sum_{k=1}^n\frac{|\alpha_k|^2}{p_j\lambda_{jk}}.
  \]
Hence, the space spanned by $\{\widehat{\Pi}_j\ket{\psi_{jk}}\}_{k=1}^{r_j}$ has dimension $r_j$.  Now, applying $\widehat{\Pi}_j(\widehat{\Lambda}-p_j\rho_j)=0$ in Eq. \eqref{Eq:optimal-conds2-hat} to $\ket{\widetilde{\psi}_{ik}}$ gives
\begin{equation}
\widehat{\Pi}_j(p_i\lambda_{ik}\widehat{\Pi}_i\ket{\psi_{ik}})=\delta_{ij}p_i\lambda_{ik}\widehat{\Pi}_i\ket{\psi_{ik}}.
\end{equation}
Thus, for every $1\leq k\leq r_i$, the element $\widehat{\Pi}_i\ket{\psi_{ik}}$ lies in the kernel of $\widehat{\Pi}_j$ for $i\not=j$, while $\widehat{\Pi}_i\ket{\psi_{ik}}$ is an eigenvector of $\widehat{\Pi}_j$ with eigenvalue +1 when $i=j$.  Thus, $\widehat{\Pi}_i$ must be the projector onto $S_i$, the $r_i$-dimensional subspace spanned by $\{\widehat{\Pi}_i\ket{\psi_{ik}}\}_{k=1}^{r_i}$.  Additionally, we have that $\widehat{\Pi}_i\widehat{\Pi}_j=\delta_{ij}\widehat{\Pi}_i$.  Clearly then for the original POVM $\{\Pi_i\}_{i=1}^n$ we have $\tr[\tfrac{1}{r_j}\Upsilon_{S_j}\Pi_i]=\delta_{ij}$, where $\Upsilon_{S_j}:=\widehat{\Pi}_j$ is the projector onto $S_j$.  Hence, the POVM can perfectly distinguish the normalized subspace projectors $\frac{1}{r_j}\Upsilon_{S_j}$.  Conversely, if a POVM $\{\Sigma_i\}_{i=1}^n$ perfectly distinguishes $\{\frac{1}{r_j}\Upsilon_{S_j}\}_{j=1}^n$, then we must have $\Upsilon_S\Sigma_i\Upsilon_S:=\widehat{\Sigma}_i=\Upsilon_{S_i}$, which means that the $\{\Sigma_i\}_{i=1}^n$ obtains $P_{opt}$ on $\mathcal{E}$.

To prove uniqueness, suppose that $\{\widehat{\Pi}_i\}_{i=1}^n$ and $\{\widehat{\Pi}'_i\}_{i=1}^n$ are two optimal POVMs on the subspace $S$.  Then any convex combination $\{\lambda\widehat{\Pi}_i+(1-\lambda)\hat{\Pi}_i'\}_{i=1}^n$ will also be optimal.  But as shown above, optimality requires that $\widehat{\Pi}_i$, $\widehat{\Pi}_i'$ and $\lambda\widehat{\Pi}_i+(1-\lambda)\widehat{\Pi}_i'$ are projectors onto some $r_i$-dimensional subspace of $S$.  This is possible only if $\widehat{\Pi}_i=\widehat{\Pi}_i'$.
\end{proof}

\subsubsection{Conditions of LOCC Optimality for Two-Qubit Pure States}

We now apply Proposition \ref{Prop1} to the LOCC discrimination of two-qubit pure ensembles $\{\ket{\psi_i}, p_i\}_{i=1}^n$.  As we only consider linearly independent ensembles, obviously $n\leq 4$.  In the pure state case, the subspace projectors $\Upsilon_{S_i}$ correspond to an orthonormal basis $\{\ket{\phi_i}\}_{i=1}^n$ for the space spanned by the $\ket{\psi_i}$.  Thus, any LOCC POVM $\Pi_i$ optimally distinguishes the $\ket{\psi_i}$ if and only if it can perfectly distinguish the $\ket{\phi_i}$.  However for two-qubit ensembles, the conditions for perfect discrimination among orthogonal states have already been proven by Walgate and Hardy \cite{Walgate-2002a}.  We thus obtain the following.

\begin{proposition}
\label{Prop2}
Consider an ensemble of linearly independent two-qubit states $\{\ket{\psi_i}, p_i\}_{i=1}^n$.  If $n=3$, then an LOCC protocol can optimally discriminate the ensemble (in the minimum error sense) if and only if the states $\{\ket{\phi_i}\}_{i=1}^n$ corresponding to the projectors $\Upsilon_{S_i}=\op{\phi_i}{\phi_i}$ described by Proposition \ref{Prop1} contain at least two product states.  If $n=4$, then all of the $\ket{\phi_i}$ must be product states.
\end{proposition}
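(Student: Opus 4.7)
The plan is to reduce the minimum-error optimality question directly to the perfect-discrimination problems already catalogued in Section \ref{Sect:2x2Perfect}. By Proposition \ref{Prop1}, since the ensemble consists of pure states, each subspace $S_i$ in the unique decomposition has dimension $r_i=1$ and is therefore spanned by a single unit vector $\ket{\phi_i}$ (so $\Upsilon_{S_i}=\op{\phi_i}{\phi_i}$ as stated). The collection $\{\ket{\phi_i}\}_{i=1}^n$ is automatically orthonormal, since the $S_i$ are pairwise orthogonal subspaces of the span of the $\ket{\psi_i}$, and Proposition \ref{Prop1} asserts that an LOCC POVM achieves $P_{opt}$ on $\{\ket{\psi_i},p_i\}_{i=1}^n$ if and only if it perfectly discriminates the $\ket{\phi_i}$. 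The whole question therefore reduces to deciding when $n$ orthogonal two-qubit pure states can be perfectly distinguished by LOCC.

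For $n=3$ this is exactly the $\{1,1,1\}$ class from Section \ref{Sect:2x2Perfect}: three orthogonal pure two-qubit states spanning a three-dimensional subspace. The Walgate--Hardy result quoted there says such an ensemble is perfectly LOCC-distinguishable if and only if at least two of the three states are product states, which gives the first half of the claim. For $n=4$, the set $\{\ket{\phi_i}\}_{i=1}^4$ is a full orthonormal basis of $\mc{H}_1\otimes \mc{H}_2$, putting the ensemble in the $\{1,1,1,1\}$ class. The corresponding criterion in Section \ref{Sect:2x2Perfect} requires all four basis vectors to be product states, which gives the second half.

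I do not foresee any real obstacle, because the heavy lifting has already been done: Proposition \ref{Prop1} converts a non-orthogonal, weighted minimum-error problem into a canonical perfect-discrimination problem, and the class-by-class analysis in Section \ref{Sect:2x2Perfect} provides the necessary and sufficient LOCC criteria for exactly the two cases $\{1,1,1\}$ and $\{1,1,1,1\}$ that arise. The only step worth writing out carefully is to remind the reader that the $\ket{\phi_i}$ generated by the reduction are genuinely the ``representatives'' of a $\{1,1,1\}$ or $\{1,1,1,1\}$ perfect-discrimination ensemble on two qubits, after which the proposition follows by quoting the appropriate line from Section \ref{Sect:2x2Perfect}.
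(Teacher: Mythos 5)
Your proposal is correct and follows essentially the same route as the paper: apply Proposition \ref{Prop1} to reduce minimum-error optimality to perfect LOCC discrimination of the orthonormal states $\ket{\phi_i}$, then invoke the Walgate--Hardy criteria for the $\{1,1,1\}$ and $\{1,1,1,1\}$ classes from Section \ref{Sect:2x2Perfect}. No gaps; this is exactly the paper's argument.
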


Equivalently, for $n=3$ (resp. $n=4$) LOCC feasibility requires the decomposition $S=S_1\oplus..\oplus S_n$ of Proposition \ref{Prop1} to consist of at least two (resp. four) tensor product subspaces.  Applying this result is still rather difficult since there appears to be no easy method for determining whether or not the optimal POVM projectors $\op{\phi_i}{\phi_i}$ have product state form.  However, product POVMs belong to the more general class of separable POVMs, and for two-qubit systems, separability is captured by the PPT condition: a two-qubit positive operator $M$ is separable if and only if $M^\Gamma\geq 0$, where $\Gamma$ denotes the partial transpose operation \cite{Horodecki-1996a}.  Since we know that the optimal POVM is a unique rank one projective measurement, and the only PPT rank one projectors are product projectors, we thus obtain the following.
\begin{proposition}
\label{Prop:SDP}
Let $\{\ket{\psi_i},p_i\}_{i=1}^n$ an ensemble of linearly independent two-qubit states and $\Upsilon_S$ the projector on the subspace spanned by the $\ket{\psi_i}$.  Then the ensemble can be optimally distinguished by LOCC if and only if the following semi-definite program is feasible: Let $\Lambda = \sum_{j=1}^np_j\Pi_j\op{\psi_j}{\psi_j}$, $\forall i = 1, 2,\cdots, n$
\begin{align}
&\Lambda -p_i\proj{\psi_i} \geq 0,  \qquad \notag\\
\text{subject to}&  \notag\\
& \Pi_i \geq 0\ \text{and}\ \Lambda^\dagger=\Lambda\notag; 
\end{align}
\begin{align}
\text{In addition, if } n=3:&\notag\\
&(\Upsilon_S\Pi_\lambda\Upsilon_S)^\Gamma\geq 0\quad\text{for at least two $\lambda\in\{1,2,3\}$}.\notag\\
\text{if } n=4:&\notag\\
&(\Upsilon_S\Pi_\lambda\Upsilon_S)^\Gamma\geq 0\quad\text{for all $\lambda\in\{1,2,3,4\}$}.\notag
\end{align}
\end{proposition}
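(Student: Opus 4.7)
The plan is to combine the Holevo--Yuen optimality conditions (recalled at the start of Section \ref{Sect:MinError}), the unique-decomposition theorem of Proposition \ref{Prop1}, the LOCC characterization of Proposition \ref{Prop2}, and the PPT criterion for two-qubit separability into a single equivalence. The SDP statement packages exactly these ingredients: the first block of constraints encodes \emph{optimality}, while the PPT block encodes \emph{LOCC feasibility} on top of optimality.

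First I would observe that the first block of SDP constraints, namely $\Pi_i \geq 0$, $\Lambda^\dagger = \Lambda$, and $\Lambda - p_i\proj{\psi_i}\geq 0$ for all $i$, together with the implicit POVM normalization $\sum_i \Pi_i = \mathbb{I}$, is precisely the Holevo--Yuen necessary and sufficient characterization of minimum-error optimality reproduced earlier; thus a POVM satisfies these inequalities if and only if it achieves $P_{opt}$ on $\mathcal{E}$. Specializing Proposition \ref{Prop1} to the pure-state case, each subspace $S_i$ is one-dimensional, so there is a unique orthonormal basis $\{\ket{\phi_i}\}_{i=1}^n$ of $S$ such that every optimal POVM $\{\Pi_i\}_{i=1}^n$ must satisfy $\Upsilon_S \Pi_i \Upsilon_S = \op{\phi_i}{\phi_i}$. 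Consequently, the compressed operators that appear in the PPT conditions are completely determined by the ensemble alone.

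Proposition \ref{Prop2} then says that LOCC attains $P_{opt}$ iff at least two (for $n=3$) or all four (for $n=4$) of the $\ket{\phi_i}$ are product vectors. The last step is to invoke the fact that a two-qubit rank-one positive operator $\op{\phi_i}{\phi_i}$ is PPT if and only if $\ket{\phi_i}$ is a product vector, since the partial transpose of a projector onto any entangled two-qubit state has a strictly negative eigenvalue. Assembling the pieces: if LOCC achieves $P_{opt}$, an optimal LOCC POVM satisfies the first block automatically and, by Proposition \ref{Prop2} plus the rank-one PPT characterization, also satisfies the required PPT conditions on $\Upsilon_S \Pi_\lambda \Upsilon_S$. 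Conversely, any feasible point of the SDP yields an optimal POVM whose compressions to $S$ are the unique rank-one projectors from Proposition \ref{Prop1}; the PPT hypothesis forces the prescribed number of these to be product projectors, and Proposition \ref{Prop2} then supplies an LOCC protocol achieving $P_{opt}$.

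The main subtlety, rather than a genuine obstacle, is that the PPT test must be applied to the compressed operator $\Upsilon_S \Pi_\lambda \Upsilon_S$ and not to $\Pi_\lambda$ itself: outside $S$ the POVM elements are not fixed by optimality, so $\Pi_\lambda$ could easily be non-PPT even when the corresponding $\ket{\phi_\lambda}$ is product. It is exactly the uniqueness in Proposition \ref{Prop1} that makes the PPT condition on the projection a well-defined, ensemble-intrinsic test of whether $\ket{\phi_\lambda}$ is a product state, and thus a sensible SDP constraint.
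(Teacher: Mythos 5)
Your proposal is correct and follows essentially the same route the paper takes: the paper obtains Proposition \ref{Prop:SDP} by exactly this assembly of the Holevo--Yuen optimality conditions (the first constraint block), the uniqueness of the compressed rank-one projectors from Proposition \ref{Prop1}, the product-state counting criterion of Proposition \ref{Prop2}, and the fact that a rank-one two-qubit PPT projector must be a product projector. Your closing remark about why the PPT test must be applied to $\Upsilon_S\Pi_\lambda\Upsilon_S$ rather than to $\Pi_\lambda$ itself is precisely the point implicit in the paper's formulation of the constraint.
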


\subsubsection{LOCC is Not Optimal for Almost All Pure Three-State Ensembles}

\label{Sect:AlmostAll}

Proposition \ref{Prop:SDP} at least makes it a computable task to decide whether or not LOCC can achieve the optimal minimum error discrimination probability.  Nevertheless, we can use a randomized argument to show that for general instances, the SDP of Proposition \ref{Prop:SDP} is not feasible.  More precisely, if we randomly choose a two-qubit ensemble consisting of at least three states, almost surely there will be no LOCC protocol that optimally discriminates them.

We begin by fixing an algorithm for generating a random two-qubit ensemble of three pure states.  The first step consists of choosing a triple $(p_1,p_2,p_3)$ from the uniformly distributed probability simplex.  We thus assume that some distribution has been chosen and is fixed.  The protocol then consists of three independent samplings from $\mathcal{U}(2\otimes 2)$, the two-qubit unitary group, according to the uniform and unitarily invariant Haar distribution on $\mathcal{U}(2\otimes 2)$ \cite{Hall-2003a}.  Applying each of these unitaries to the state $\ket{00}$ generates an ensemble $\mathcal{E}$ of three two-qubit states.  Hence, the probability that a randomly chosen ensemble belongs to some (measurable) family of ensembles $\tau$ is $P(\tau)=\int\mathbbm{1}_\tau(\mathcal{E}) d\mathcal{E}$, where $\mathbbm{1}$ is the indicator function.  Given $\mathcal{E}$, we add one additional randomization step: another unitary $U$ is randomly chosen from $\mathcal{U}(2\otimes 2)$ and applied to $\mathcal{E}$.  In other words, we transform $\mathcal{E}\to U\mathcal{E}$, where $U\mathcal{E}$ is the ensemble obtained by applying $U$ to each element in $\mathcal{E}$.  In summary, letting $\mathcal{U}_\mathcal{E}:=\{U\mathcal{E}:U\in\mathcal{U}(2\otimes 2)\}$, our random ensemble is generated by first choosing $\mathcal{E}$ and then choosing a random $U\mathcal{E}\in\mathcal{U}_\mathcal{E}$.  Unitary invariance of the Haar measure ensures that $U\mathcal{E}$ has been uniformly selected among all possible ensembles of three states:
\begin{align}
P(\tau)=\int\mathbbm{1}_\tau(\mathcal{E}) d\mathcal{E}&=\int\int\mathbbm{1}_\tau(\mathcal{E}) d\mathcal{E}dU\notag\\
&=\int\int\mathbbm{1}_{U^\dagger\tau}(\mathcal{E}) d\mathcal{E}dU\notag\\
&=\int\int\mathbbm{1}_\tau(U\mathcal{E}) dUd\mathcal{E},
\end{align}
where the third equality follows from $P(\tau)=P(U^\dagger\tau)$.

The purpose of the extra randomization step is seen by the following lemma, which essentially reduces the problem of randomly choosing an ensemble to randomly choosing three orthogonal states.

\begin{lemma}
Let $\mathfrak{B}$ be the collection of all triples $(\ket{\phi_1},\ket{\phi_2},\ket{\phi_3})$ of pairwise orthonormal two-qubit states.  For each linearly independent ensemble $\mathcal{E}$ of three states, there is a bijection $\Delta:\mathcal{U}_\mathcal{E}\to\mathcal{B}$ such that $(\Delta(U\mathcal{E}))_{i=1}^3$ gives the optimal measurement basis for distinguishing the ensemble $U\mathcal{E}$.
\end{lemma}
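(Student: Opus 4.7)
The plan is to construct $\Delta$ explicitly from Proposition~\ref{Prop1} and then verify that it is a well-defined bijection. For the fixed ensemble $\mathcal{E}=\{\ket{\psi_i},p_i\}_{i=1}^3$, Proposition~\ref{Prop1} selects a unique decomposition $S=S_1\oplus S_2\oplus S_3$ into one-dimensional subspaces; fix once and for all an orthonormal triple $(\ket{\phi_1},\ket{\phi_2},\ket{\phi_3})$ with $\ket{\phi_i}\in S_i$, using some phase convention (e.g., requiring the first nonzero coordinate of each $\ket{\phi_i}$ to be real and positive). Since the optimality conditions in~\eqref{Eq:optimal-conds2} are manifestly unitary covariant, the optimal measurement basis for $U\mathcal{E}$ must be $(U\ket{\phi_1},U\ket{\phi_2},U\ket{\phi_3})$, so I set
\[
\Delta(U\mathcal{E}):=(U\ket{\phi_1},U\ket{\phi_2},U\ket{\phi_3}),
\]
which automatically satisfies the final clause of the lemma.

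Well-definedness and injectivity would then follow from a single observation. If $U\mathcal{E}=V\mathcal{E}$, then $U\ket{\psi_i}=V\ket{\psi_i}$ for each $i$, so $V^{-1}U$ fixes each $\ket{\psi_i}$; by linear independence the $\ket{\psi_i}$ span $S$, so $V^{-1}U$ acts as the identity on $S$, giving $U\ket{\phi_j}=V\ket{\phi_j}$ for $j=1,2,3$ (well-definedness). Conversely, if $U\ket{\phi_j}=V\ket{\phi_j}$ for all $j$, then $V^{-1}U=\mathbb{I}$ on $S$; since every $\ket{\psi_i}$ lies in $S$ we obtain $U\ket{\psi_i}=V\ket{\psi_i}$ and hence $U\mathcal{E}=V\mathcal{E}$ (injectivity). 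For surjectivity, given any $(\ket{\chi_1},\ket{\chi_2},\ket{\chi_3})\in\mathfrak{B}$, I extend both $(\ket{\phi_i})_{i=1}^3$ and $(\ket{\chi_i})_{i=1}^3$ to ordered orthonormal bases of $\mathbb{C}^4$ and let $U$ be the unitary sending the first basis to the second; then by construction $\Delta(U\mathcal{E})=(\ket{\chi_1},\ket{\chi_2},\ket{\chi_3})$.

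The main subtlety --- and arguably the only real obstacle --- is phase bookkeeping: the optimal POVM determines the basis vectors only up to individual unit-modulus phases, and physical ensembles are naturally identified up to per-state phases as well. Both ambiguities must be reconciled consistently, either by the canonical lift described above or, equivalently, by treating $\mathfrak{B}$ as the space of ordered triples of rank-one projectors. With either convention fixed at the outset, the three verifications in the previous paragraph go through verbatim, and the key structural facts (namely, that $\mathcal{E}\subseteq S$ and that the optimal subspaces of $\mathcal{E}$ transport covariantly under $\mathcal{U}(2\otimes 2)$) do all the work.
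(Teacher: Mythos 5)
Your proof is correct and follows essentially the same route as the paper's: define $\Delta(U\mathcal{E})=(U\ket{\phi_i})_{i=1}^3$ via the unitary covariance of the optimality conditions combined with the uniqueness guaranteed by Proposition~\ref{Prop1}, then obtain injectivity from $V^\dagger U$ acting as the identity on the span $S$ (which contains the $\ket{\psi_i}$) and surjectivity from the transitivity of the unitary group on orthonormal triples. Your extra attention to well-definedness and to the phase ambiguity (resolving it by a fixed convention or by reading $\mathfrak{B}$ as triples of rank-one projectors) only makes explicit points the paper leaves implicit.
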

\begin{proof}
For $\mathcal{E}=\{\ket{\psi_i},p_i\}_{i=1}^3$, Proposition \ref{Prop1} ensures the existence of states $(\ket{\phi_i})_{i=1}^3=:\Delta(\mathcal{E})\in\mathfrak{B}$ that form an optimal detection basis.  From the condition $\sum_{i=1}^3p_i\ket{\phi_i}\ip{\phi_i}{\psi_i}\bra{\psi_i}-p_k\op{\psi_k}{\psi_k}\geq 0$, we see that the optimal measurement basis for $U\mathcal{E}$ is given by the sequence $(U\ket{\phi_i})_{i=1}^3$, which is $\Delta(U\mathcal{E})\in\mathfrak{B}$.  Since $(U\ket{\phi_i})_{i=1}^3=(V\ket{\phi_i})_{i=1}^3$ iff $V^\dagger U$ is the identity, the mapping $\Delta$ is injective.  Conversely, any element $(\ket{\phi'_i})_{i=1}^3$ in $\mathfrak{B}$ can be related to $\Delta(\mathcal{E})=(\ket{\phi_i})_{i=1}^3$ by a unitary $V$, and so the measurement basis given by $(\ket{\phi_i'})_{i=1}^3$ is optimal for the ensemble $V\mathcal{E}\in\mathcal{U}_\mathcal{E}$.  Hence, $\Delta$ is surjective.
\end{proof}

Putting it all together, for a fixed distribution $(p_1,p_2,p_3)$, let $L$ be the set of all two-qubit linearly independent ensembles that can be perfectly distinguished by LOCC.  Since the subset of all linearly dependent ensembles has measure $0$, it suffices to only consider the probability of randomly choosing an ensemble belonging to $L$.  We have $P(L)=\int\int\mathbbm{1}_{L}(U\mathcal{E}) dUd\mathcal{E}$, and by Proposition \ref{Prop2} and the previous Lemma, $\int\mathbbm{1}_{L}(U\mathcal{E}) dU$ is precisely the probability of randomly choosing an orthonormal triple $(\ket{\phi_1},\ket{\phi_2},\ket{\phi_3})$ such that at least two of them are product states.  This occurs with probability zero, and thus we prove the following.
\begin{theorem}
Three randomly chosen two-qubit pure states almost surely cannot be discriminated optimally by LOCC.
\end{theorem}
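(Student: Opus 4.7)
The plan is to complete the argument already laid out in the paragraph preceding the theorem statement, which reduces the problem to a Haar-measure calculation on orthonormal triples. First I would fix a probability distribution $(p_1,p_2,p_3)$ and let $L$ denote the set of linearly independent two-qubit three-state ensembles that are optimally discriminable by LOCC. Since the set of linearly dependent triples has Haar measure zero in the sampling procedure (the determinant of the matrix $[\psi_1\,\psi_2\,\psi_3]$ vanishes on a proper analytic subvariety), it suffices to control $P(L)$.

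Next, invoking the Fubini identity $P(L)=\int\!\int\mathbbm{1}_L(U\mathcal{E})\,dU\,d\mathcal{E}$ from the excerpt, I would fix $\mathcal{E}$ and evaluate the inner integral. By Proposition \ref{Prop2}, the ensemble $U\mathcal{E}$ is optimally LOCC-discriminable iff at least two of the three states in the optimal detection basis $\Delta(U\mathcal{E})\in\mathfrak{B}$ are product states. By the preceding Lemma, $\Delta$ is a bijection between $\mathcal{U}_\mathcal{E}$ and $\mathfrak{B}$, and because $\Delta$ intertwines the $\mathcal{U}(2\otimes 2)$-action on $\mathcal{U}_\mathcal{E}$ with the natural action on $\mathfrak{B}$, the pushforward of the Haar measure under $\Delta$ coincides with the uniform measure on orthonormal triples (which is itself the image of Haar measure on $\mathcal{U}(2\otimes 2)$ given by taking the first three columns of a random unitary). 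So the inner integral equals the probability that a Haar-random orthonormal triple $(\ket{\phi_1},\ket{\phi_2},\ket{\phi_3})$ contains at least two product states.

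The core technical point — and where I expect the only real content of the proof to sit — is showing that this probability is zero. The set of product pure states is the image of the Segre embedding $\mathbb{CP}^1\times\mathbb{CP}^1\hookrightarrow\mathbb{CP}^3$, a $4$-real-dimensional algebraic subvariety of the $6$-real-dimensional pure-state manifold, hence of positive codimension and measure zero under the Fubini--Study measure. For a Haar-random unitary $U\in\mathcal{U}(4)$ the marginal distribution of each column $\ket{\phi_i}$ is Fubini--Study uniform on the unit sphere, so $\Pr[\ket{\phi_i}\text{ is product}]=0$; by a union bound over the three pairs $\{i,j\}\subset\{1,2,3\}$, $\Pr[\text{at least two $\ket{\phi_i}$ are product}]=0$ as well. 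Hence the inner integral vanishes identically in $\mathcal{E}$, giving $P(L)=0$.

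The main obstacle is really just making the measure-theoretic bookkeeping watertight — verifying that the map $\Delta$ is measurable and measure-preserving in the appropriate sense, that linear dependence is a null set, and that the Segre variety is genuinely of positive codimension (so that conditioning on $\ket{\phi_1}$ being product still leaves $\ket{\phi_2}$ uniformly distributed on the orthogonal complement, inside which product states again form a proper subvariety). Once those measure-zero facts are collected, the theorem follows immediately from Proposition \ref{Prop2} together with the unitary-invariance argument already supplied.
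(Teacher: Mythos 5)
Your proposal is correct and takes essentially the same route as the paper: the paper's proof is precisely the reduction you describe---fix $(p_1,p_2,p_3)$, discard the measure-zero set of linearly dependent ensembles, use the identity $P(L)=\int\int\mathbbm{1}_{L}(U\mathcal{E})\,dU\,d\mathcal{E}$ together with the bijection $\Delta$ and Proposition \ref{Prop2} to equate the inner integral with the probability that a Haar-random orthonormal triple contains at least two product states, and conclude that this probability vanishes. The only difference is that the paper asserts this final measure-zero fact without justification, whereas you correctly supply it via the positive codimension of the Segre variety and a union bound over pairs.
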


\subsection{Discrimination of Ensembles that Lack Entanglement}

It is especially interesting to consider ensembles that do not possess any entanglement.  When each of the states are product states, one might naively suppose that the ensemble also lacks any sort of nonlocality, and thus LOCC should be able to achieve optimal discrimination.  However, this intuition turns out to be incorrect as there exists product state ensembles that cannot be optimally distinguished by LOCC \cite{Bennett-1999a}.  Below, we use Proposition \ref{Prop1} to prove new instances of nonlocality without entanglement in two-qubit systems.  

\subsubsection{Three States}

\label{Sect:3states}

First we turn to ensembles of three product states.  The following provides a necessary condition for optimal discrimination by LOCC.

\begin{theorem}
\label{Thm:3prodstates}
Suppose that $\{\ket{\psi_\lambda}:=\ket{\alpha_\lambda}\ket{\beta_\lambda},p_\lambda\}_{\lambda=1}^3$ ($p_\lambda>0$) is some linearly independent two-qubit product state ensemble that spans $\{\ket{\Phi}\}^\perp$.  Let $\lambda_{min}(\Phi)$ denote the smallest squared Schmidt coefficient of $\ket{\Phi}$.  If 
\[p_i^2\lambda_{min}^2(\Phi)>p_j^2|\ip{\psi_i}{\psi_j}|^2+p_k^2|\ip{\psi_i}{\psi_k}|^2 \] for every choice of $i,j,k$ such that $\{i,j,k\}=\{1,2,3\}$, then the ensemble cannot be distinguished optimally (in the minimum error sense) with LOCC.
\end{theorem}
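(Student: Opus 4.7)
The plan is a proof by contrapositive: assuming LOCC achieves the optimal minimum-error probability, I will derive the reverse of the strict inequality for the index $i$ such that $\ket{\phi_i}$ is the one entangled vector in the optimal basis. By Propositions~\ref{Prop1}--\ref{Prop2}, LOCC optimality is equivalent to the unique optimal POVM basis $\{\ket{\phi_\lambda}\}_{\lambda=1}^{3}$ of $\{\ket{\Phi}\}^{\perp}$ containing at least two product states. Since the hypothesis forces $\lambda_{\min}(\Phi)>0$ (else the LHS vanishes), $\ket{\Phi}$ is entangled; a short case analysis on orthogonal two-qubit product pairs shows that having three orthogonal products in $\{\ket{\Phi}\}^{\perp}$ would force $\ket{\Phi}$ itself to be a product, so exactly two of the $\ket{\phi_\lambda}$ are product, say $\ket{\phi_j},\ket{\phi_k}$, with $\ket{\phi_i}$ entangled.

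Next I would use the subspace structure to bound the overlap $|\ip{\psi_i}{\phi_i}|$. Since $\op{\phi_j}{\phi_j}+\op{\phi_k}{\phi_k}$ is a rank-two separable projector, Lemma~\ref{Lem:MixedSep}(ii) applied to the entangled orthogonal pair $\ket{\phi_i},\ket{\Phi}$ in the 2D complement gives $C(\phi_i)=C(\Phi)$, hence $\lambda_{\max}(\phi_i)=1-\lambda_{\min}(\Phi)$. Since $\ket{\psi_i}$ is a product state, its overlap with any pure state is bounded by the square root of that state's largest squared Schmidt coefficient. Writing $\alpha_\lambda:=\ip{\psi_\lambda}{\phi_\lambda}$ and $\beta_{\mu\lambda}:=\ip{\phi_\mu}{\psi_\lambda}$ with phases chosen so $\alpha_i\geq 0$, this gives $\alpha_i^{2}\leq 1-\lambda_{\min}(\Phi)$, so Parseval on $\ket{\psi_i}$ yields
\[|\beta_{ji}|^{2}+|\beta_{ki}|^{2}\ =\ 1-\alpha_i^{2}\ \geq\ \lambda_{\min}(\Phi).\]

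Then I would extract a scalar inequality from the optimality of $\widehat{\Lambda}$. By Proposition~\ref{Prop1}, $\widehat{\Lambda}\ket{\phi_\mu}=p_\mu\alpha_\mu\ket{\psi_\mu}$ for each $\mu$, $\widehat{\Lambda}$ is Hermitian, and $\widehat{\Lambda}\geq p_i\op{\psi_i}{\psi_i}$. Expanding $\ket{\psi_i}=\sum_\mu\beta_{\mu i}\ket{\phi_\mu}$ and computing $\bra{\psi_i}\widehat{\Lambda}\ket{\psi_i}$ (which is real by Hermiticity), the positivity constraint gives
\[\operatorname{Re}\!\bigl[p_j\alpha_j\beta_{ji}\ip{\psi_i}{\psi_j}+p_k\alpha_k\beta_{ki}\ip{\psi_i}{\psi_k}\bigr]\ \geq\ p_i(|\beta_{ji}|^{2}+|\beta_{ki}|^{2}).\]
Cauchy--Schwarz in $\mathbb{C}^{2}$ applied to the vectors $(\alpha_j\beta_{ji},\alpha_k\beta_{ki})$ and $(p_j\ip{\psi_i}{\psi_j},p_k\ip{\psi_i}{\psi_k})$, together with $|\alpha_j|,|\alpha_k|\leq 1$ and division by $|\beta_{ji}|^{2}+|\beta_{ki}|^{2}>0$, yields
\[p_j^{2}|\ip{\psi_i}{\psi_j}|^{2}+p_k^{2}|\ip{\psi_i}{\psi_k}|^{2}\ \geq\ p_i^{2}(|\beta_{ji}|^{2}+|\beta_{ki}|^{2})\ \geq\ p_i^{2}\lambda_{\min}(\Phi)\ \geq\ p_i^{2}\lambda_{\min}^{2}(\Phi),\]
the final step using $\lambda_{\min}(\Phi)\leq 1/2$. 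This contradicts the strict inequality in the hypothesis.

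The main obstacle is identifying the right test vector ($\ket{\psi_i}$ itself) for the positivity constraint and the right splitting of the bilinear expression for Cauchy--Schwarz; once these are in place the computation is essentially mechanical. The product structure of $\ket{\psi_i}$ enters solely through the Schmidt bound $\alpha_i^{2}\leq\lambda_{\max}(\phi_i)$, whereas the product structure of $\ket{\psi_j},\ket{\psi_k}$ is not directly used in the derivation.
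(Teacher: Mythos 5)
Your proof is correct and follows essentially the same route as the paper's: reduce via Propositions~\ref{Prop1}--\ref{Prop2} to the unique orthonormal detection basis containing two product states, deduce $C(\phi_i)=C(\Phi)$ so that $|\ip{\psi_i}{\phi_i}|^2\leq 1-\lambda_{\min}(\Phi)$, contract the positivity condition $\widehat{\Lambda}-p_i\proj{\psi_i}\geq 0$ with $\ket{\psi_i}$, and finish with Cauchy--Schwarz. The only substantive difference is bookkeeping: by retaining the factor $\sqrt{1-|\ip{\psi_i}{\phi_i}|^2}$ rather than bounding it by $1$, you in fact obtain the stronger conclusion $p_j^2|\ip{\psi_i}{\psi_j}|^2+p_k^2|\ip{\psi_i}{\psi_k}|^2\geq p_i^2\lambda_{\min}(\Phi)$, i.e.\ the theorem with $\lambda_{\min}(\Phi)$ in place of $\lambda_{\min}^2(\Phi)$ in the hypothesis, which implies the stated version.
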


\begin{proof}
Following Proposition 2, if there exists an LOCC protocol that obtains optimality, then we have that $\ket{\phi_j}$ and $\ket{\phi_k}$ are product states for some $j\not= k\in\{1,2,3\}$.  It is then easy to show \cite{Duan-2007a} that $C(\phi_i)=C(\Phi)$, where $C(\phi_i)$ is the concurrence of $\ket{\phi_i}$ \cite{Wootters-1998a}.  Since $\Phi$ and $\phi_i$ have the same concurrence, they also have the same Schmidt coefficients.  Optimality requires
\[\sum_{\lambda=j,k} p_\lambda\ket{\phi_\lambda}\ip{\phi_\lambda}{\psi_\lambda}\bra{\psi_\lambda}+p_i\ket{\phi_i}\ip{\phi_i}{\psi_i}\bra{\psi_i}\geq p_i\op{\psi_i}{\psi_i}.\]
Contracting both sides by $\ket{\psi_i}$ gives 
\begin{equation}\label{condi01}
\sum_{\lambda=j,k} p_\lambda\ip{\psi_i}{\phi_\lambda}\ip{\phi_\lambda}{\psi_\lambda}\ip{\psi_\lambda}{\psi_i}\geq p_i(1-|\ip{\psi_i}{\phi_i}|^2).
\end{equation}
Applying the Cauchy-Schwarz inequality on the LHS of (\ref{condi01}) gives
\begin{align}\label{condi02}
&\left|\sum_{\lambda=j,k} p_\lambda\ip{\psi_i}{\phi_\lambda}\ip{\phi_\lambda}{\psi_\lambda}\ip{\psi_\lambda}{\psi_i}\right| \notag\\
&\leq\sqrt{\sum_{\lambda=j,k}|\ip{\psi_i}{\phi_\lambda}|^2}\sqrt{\sum_{\lambda=j,k}p_\lambda^2|\ip{\psi_i}{\psi_\lambda}|^2}\notag\\
&=\sqrt{1-|\ip{\psi_i}{\phi_i}|^2} \sqrt{\sum_{\lambda=j,k}p_\lambda^2|\ip{\psi_i}{\psi_\lambda}|^2}\notag \\
&\leq \sqrt{\sum_{\lambda=j,k}p_\lambda^2|\ip{\psi_i}{\psi_\lambda}|^2}.
\end{align}
Hence, Eqs. (\ref{condi01})-(\ref{condi02}) give
\begin{align}
\sqrt{\sum_{\lambda=j,k}p_\lambda^2|\ip{\psi_i}{\psi_\lambda}|^2}&\geq p_i \left(1-|\ip{\psi_i}{\phi_i}|^2 \right)\notag\\ 
&\geq p_i \lambda_{min}(\Phi)
\end{align}
where the second inequality follows from the fact that $\ket{\psi_i}$ is a product state and therefore its overlap with $\ket{\phi_i}$ can be no greater than the largest Schmidt coefficient of $\ket{\phi_i}$.  This gives the desired necessary condition for LOCC discrimination.
\end{proof}

Theorem \ref{Thm:3prodstates} is very useful for constructing ensembles that demonstrate ``non-locality without entanglement''.  Despite consisting of product states, ensembles satisfying the condition of Theorem \ref{Thm:3prodstates} possess some non-local aspect since LOCC is insufficient for optimal discrimination.  Furthermore, we can obtain examples in which separable operations attain optimal discrimination but LOCC cannot.  For this, we rely on the fact that, as mentioned in Section \ref{Sect:2x2Perfect} three states can be perfectly distinguished by separable operations iff their concurrence sums to the concurrence of their common orthogonal complement state.  Hence, separable operations become strictly more powerful than LOCC for distinguishing a set of product states $\ket{\psi_i}$ that satisfy Theorem \ref{Thm:3prodstates} while their corresponding detection states $\ket{\phi_i}$ satisfy $\sum_{i=1}^3C(\phi_i)=C(\Phi)$.

An important example of such an ensemble is the so-called ``double trine'' ensemble \cite{Peres-1991a}, which is given by a uniform distribution of the states $\ket{\psi_i}=\ket{s_i}\otimes\ket{s_i}$ for $i=0,1,2$ where
\begin{align}\label{eq-doubletrine}
\ket{s_0}&=\ket{0}, \notag \\
\ket{s_1}&=-\frac{1}{2}\ket{0}-\frac{\sqrt{3}}{2}\ket{1}, \notag \\
\ket{s_2}&=-\frac{1}{2}\ket{0}+\frac{\sqrt{3}}{2}\ket{1}.
\end{align}
The inability for LOCC to optimally discriminte the double trine states follows from Theorem \ref{Thm:3prodstates} and the fact that $|\ip{\psi_i}{\psi_j}|^2=1/16$, while $\lambda_{min}^2(\Phi)=1/4$.  Thus, $1/4>1/8$.  On the other hand, it can be easily computed \cite{Peres-1991a, Ban-1997a, Chitambar-2013b} that the detection states $\ket{\phi_i}$ each have a concurrence of $1/3$.  Since the maximally entangled singlet state lies orthogonal to each of the $\ket{\psi_i}$, we indeed have $\sum_{i=1}^3C(\phi_i)=C(\Phi)=1$.  Hence, a separable POVM can optimally discriminate the double trine ensemble.  In Ref. \cite{Chitambar-2013b}, an even stronger result was proven that a finite gap exists between the best achievable LOCC success probability and the optimal probability achievable by \textit{any} physically implementable LOCC operation.  In Section \ref{Sect:Ncopy}, we generalize the double trine ensemble to its $N$-copy form.

\subsubsection{Discrimination of One Pure Product State and One Rank Two Separable State}

\label{Sect:PureProd}

In this section, we apply Proposition \ref{Prop1} to a mixed state discrimination problem.  Consider the following orthonormal basis $\{\ket{\Psi^-},\ket{\Psi^+},\ket{s^+(\theta)},\ket{s^-(\theta)}\}$ where 
\begin{align}
\ket{\Psi^\pm}&=\sqrt{1/2}(\ket{01}\pm\ket{10}),\notag\\
\ket{s^+(\theta)}&=\cos\theta\ket{00}+\sin\theta\ket{11},\notag\\
\ket{s^-(\theta)}&=-\sin\theta\ket{00}+\cos\theta\ket{11}
\end{align} for $0\leq\theta\leq \pi/2$.  The task is to optimally distinguish the following two states:
\begin{align}
\op{\psi}{\psi}&=\op{00}{00},\notag\\
\rho(\theta)&=\frac{1}{1+\sin 2\theta}(\op{s^+(\theta)}{s^+(\theta)}+\sin 2\theta\op{\Psi^+}{\Psi^+}).
\end{align}
The concurrence of $\ket{s^{\pm}(\theta)}$ is $1+\sin 2\theta$, and so by Lemma \ref{Lem:Runyao}, $\rho(\theta)$ is separable for any choice of $\theta$.  We note that
\begin{equation}
\label{Eq:Koashi}
\rho(\pi/4)=\frac{1}{2}(\op{++}{++}+\op{--}{--}),
\end{equation}
and the states reduce to the pair studied in Ref. \cite{Koashi-2007a}.
\begin{lemma}
For $0<\theta<\pi/2$, the states $\ket{00}$ and $\rho(\theta)$ given with arbitrary \textit{a priori} probability can be distinguished optimally by SEP but not by LOCC.
\end{lemma}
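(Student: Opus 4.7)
The plan is to invoke Proposition~\ref{Prop1} to collapse the two-state minimum-error problem onto a perfect-discrimination instance of the form handled in the Case $\{1,2\}$ portion of Section~\ref{Sect:2x2Perfect}. Both $\op{00}{00}$ and $\rho(\theta)$ are supported on the three-dimensional subspace $S=\{\ket{\Psi^-}\}^\perp=\mathrm{span}\{\ket{00},\ket{11},\ket{\Psi^+}\}$, so Proposition~\ref{Prop1} furnishes a unique orthogonal decomposition $S=S_1\oplus S_2$ with $\dim S_1=1$ and $\dim S_2=2$, such that optimally discriminating $\{\op{00}{00},\rho(\theta)\}$ is equivalent to perfectly distinguishing $\proj{\phi_1}$ (with $S_1=\mathrm{span}\{\ket{\phi_1}\}$) from the normalized projector $\tfrac{1}{2}\Upsilon_{S_2}$.

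Next I would locate $\ket{\phi_1}$ explicitly. For a two-state ensemble the optimal POVM projects onto the positive part of $p_1\op{00}{00}-p_2\rho(\theta)$; writing this operator in the basis $\{\ket{00},\ket{11},\ket{\Psi^+}\}$ yields a block-diagonal form in which $\ket{\Psi^+}$ is a strictly negative eigenvector (eigenvalue $-\tfrac{p_2\sin 2\theta}{1+\sin 2\theta}$), while the remaining $2\times 2$ block on $\mathrm{span}\{\ket{00},\ket{11}\}$ has determinant proportional to $-p_1 p_2\sin^2\theta$ and hence exactly one positive eigenvalue. Its eigenvector takes the form $\ket{\phi_1}=a\ket{00}+b\ket{11}$ with $ab\neq 0$ for $\theta\in(0,\pi/2)$, so $\ket{\phi_1}$ is entangled.

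With this reduction in hand, the Case~$\{1,2\}$ classification from Section~\ref{Sect:2x2Perfect} applies to $\ket{\psi}\mapsto\ket{\phi_1}$, $\rho\mapsto\tfrac{1}{2}\Upsilon_{S_2}$, and common orthogonal state $\ket{\Phi}=\ket{\Psi^-}$. The crucial observation is that $\ket{\Psi^-}$ is \emph{maximally entangled}, so the remark at the end of the Case~$\{1,2\}$ discussion (``when $\Phi$ is maximally entangled, any such $\ket{\psi}$ and $\rho$ are perfectly distinguishable by SEP'') delivers the SEP half of the lemma for every a priori distribution. The LOCC criterion, on the other hand, demands that either $\ket{\phi_1}$ be a product state or $C(\phi_1)=C(\Psi^-)=1$; the first is ruled out by $ab\neq 0$, and $C(\phi_1)=2|ab|<1$ for typical probabilities (equality would force $|a|=|b|=1/\sqrt{2}$, giving a codimension-one constraint on $(p_1,p_2,\theta)$ that in particular is unachievable for all $\theta\in[\pi/4,\pi/2)$), so LOCC cannot achieve optimality.

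The main computational hurdle is the explicit diagonalization of the $2\times 2$ positive-part block and the resulting concurrence formula $C(\phi_1)=2|ab|$; once it is in place the SEP/LOCC dichotomy follows immediately from the already-proven Case~$\{1,2\}$ criterion, with the maximal entanglement of $\ket{\Psi^-}$ driving SEP success and the generic sub-maximality of $C(\phi_1)$ blocking LOCC.
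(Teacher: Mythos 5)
Your route is the same as the paper's: use Proposition~\ref{Prop1} to collapse the minimum-error problem to perfect discrimination of a pure state $\ket{\phi_1}$ against $\tfrac12\Upsilon_{S_2}$, note the common orthogonal state is $\ket{\Psi^-}$, and invoke the Case~$\{1,2\}$ criteria of Section~\ref{Sect:2x2Perfect}; your SEP half (via maximal entanglement of $\ket{\Psi^-}$ and entanglement of $\ket{\phi_1}$) is complete and correct for every prior. The gap is in your LOCC half, and you flagged it yourself: you conclude ``LOCC cannot achieve optimality'' from $C(\phi_1)=2|ab|<1$ \emph{for typical probabilities}, whereas the lemma asserts failure for \emph{arbitrary} a priori probability. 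The equality case $|a|=|b|=1/\sqrt{2}$ occurs exactly when the $2\times 2$ block has equal diagonal entries, i.e.\ $p_1=p_2\cos 2\theta/(1+\sin 2\theta)$ in your labelling, and this has a solution with strictly positive probabilities for every $\theta\in(0,\pi/4)$ (your claim that it is unachievable for $\theta\in[\pi/4,\pi/2)$ is right, since there $\cos 2\theta\le 0$). So as written you have proved the lemma for all priors only when $\theta\in[\pi/4,\pi/2)$, and for $\theta\in(0,\pi/4)$ only for priors off a one-parameter exceptional set.

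The important point is that this gap cannot be closed: the lemma is actually \emph{false} on that exceptional set, so your hedge locates an error in the paper rather than a defect in your argument. Take $\theta\in(0,\pi/4)$ with $p_1=\cos 2\theta/(1+\sin 2\theta+\cos 2\theta)$ and $p_2=(1+\sin 2\theta)/(1+\sin 2\theta+\cos 2\theta)$. Then the block's positive eigenvector is $\ket{\phi_1}=\tfrac{1}{\sqrt{2}}(\ket{11}-\ket{00})$, which is maximally entangled, so $C(\phi_1)=1=C(\Psi^-)$ and condition (i) holds automatically (the matrix $\phi_1(\Psi^-)^{-1}$ is traceless); moreover $S_2=\mathrm{span}\{\ket{\Psi^+},\tfrac{1}{\sqrt{2}}(\ket{00}+\ket{11})\}=\mathrm{span}\{\ket{++},\ket{--}\}$ has an orthonormal product basis, so $\Sch_\perp(\phi_1)+\Sch_\perp(\Upsilon_{S_2})=4$ and Proposition~\ref{Prop:RunyaoMain} (equivalently the Case~$\{1,2\}$ LOCC criterion with equality) says LOCC perfectly distinguishes the reduced pair. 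Concretely, Alice and Bob each measure $\{\ket{+},\ket{-}\}$ and guess $\rho(\theta)$ on agreement and $\ket{00}$ on disagreement; by Proposition~\ref{Prop1} this attains the Helstrom optimum on the original ensemble. The paper's own proof glosses over exactly this case by asserting $C(s^-(\bar{\theta}))<1$ with no justification; that assertion fails at these parameter points. In short: same approach as the paper, SEP half correct, and your LOCC half is correct precisely where the stated lemma is correct --- the ``typical probabilities'' caveat is a genuine counterexample regime, not a removable weakness.
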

\begin{proof}
We first use Proposition \ref{Prop1}, to translate the problem into one of perfect discrimination between orthogonal states.  We can easily decompose the hermitian operator $p_1\rho(\theta)-p_2\op{00}{00}$ into its positive and negative eigenspaces:
\[
p_1\rho(\theta)-p_2\op{00}{00} =a\op{\Psi^+}{\Psi^+}+b\op{s^+(\bar{\theta})}{s^+(\bar{\theta})}-c\op{s^-(\bar{\theta})}{s^-(\bar{\theta})}
\]
for positive numbers $a,b,c$ and with $\bar{\theta}\not=0,\pi/2$, since $0<\theta<\pi/2$ (an explict formula for $\bar{\theta}$ can be computed but is irrelevant for our purpose).  Optimal discrimination of $\ket{00}$ and $\rho(\theta)$ thus amounts to perfect discrimination of the state $\ket{s^-(\bar{\theta})}$ and the normalized projector $\tfrac{1}{2}\Upsilon$, where $\Upsilon=\op{\Psi^+}{\Psi^+}+\op{s^+(\bar{\theta})}{s^+(\bar{\theta})}$.  We have that $C(s^-(\bar{\theta}))<1$ which, by case $\{1,2\}$ of Section \ref{Sect:2x2Perfect}, implies that SEP achieves perfect discrimination but LOCC does not.
\end{proof}

To our knowledge, this is the first time a pair of states have been shown to be optimally distinguishable, in the minimum error sense, by SEP but not LOCC.  Ref.~\cite{Koashi-2007a} shows a distinguishability gap for $\ket{00}$ and the mixed state given by Eq. \eqref{Eq:Koashi}, but with respect to a special type of unambiguous discrimination measure. 

\subsubsection{$N$-Copy Trine Ensemble}
\label{Sect:Ncopy}
Consider the equiprobable ensemble of $N$-qubit states\footnote{When $N=2$, it corresponds to the double trine ensemble given in (\ref{eq-doubletrine}).} $\{\ket{\psi_i}\}_{i=0}^2$ with $\ket{\psi_i}=(U^i\ket{0})^{\otimes N}$, where
\[
U=-e^{2\pi i/N}\left(\begin{smallmatrix}1/2&-\sqrt{3/4}\\\sqrt{3/4}&1/2\end{smallmatrix}\right).
\] 
Note that 
\[
U^2=e^{4\pi i/N}\left(\begin{smallmatrix}-1/2&-\sqrt{3/4}\\\sqrt{3/4}&-1/2\end{smallmatrix}\right),
\]
and so the global phase of $U$ is chosen such that $(U^{\otimes N})^3=\mathbb{I}$.  Also, we have $\ip{\psi_i}{\psi_j}=(-1/2)^N$ for $i\not=j$.
 
We want to show that these states cannot be optimally distinguished using $N$-party LOCC.  To accomplish this, we will map these three states to an ensemble of qu\textit{trit} states known as the ``lifted'' trine ensemble \cite{Shor-2002a}:
\begin{align*}
\ket{L_0(\alpha)}&=\sqrt{1-\alpha}\ket{0}+\sqrt{\alpha}\ket{2}\\
\ket{L_1(\alpha)}&=\sqrt{1-\alpha}(-\tfrac{1}{2}\ket{0}-\tfrac{\sqrt{3}}{2}\ket{1})+\sqrt{\alpha}\ket{2}\\
\ket{L_2(\alpha)}&=\sqrt{1-\alpha}(-\tfrac{1}{2}\ket{0}+\tfrac{\sqrt{3}}{2}\ket{1})+\sqrt{\alpha}\ket{2}.
\end{align*} 
The value of $\alpha$ is known as the ``lifting angle'' of the ensemble.  
 
\begin{lemma}\label{lem-iso}
For any fixed $N$, there is an isomorphic mapping $\varphi:\ket{\psi_i}=(U^i\ket{0})^{\otimes N}\to \ket{L_i(\alpha)}$ with 
\[
\alpha=\frac{1}{3}\left(1-(-\frac{1}{2})^{N-1}\right),
\] 
for $i=0,1,2$.
\end{lemma}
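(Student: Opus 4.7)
The plan is to construct $\varphi$ as the unique linear isometry between the $3$-dimensional subspaces $\mathrm{span}\{\ket{\psi_i}\}_{i=0}^{2}$ and $\mathrm{span}\{\ket{L_i(\alpha)}\}_{i=0}^{2}$ determined by the assignment $\ket{\psi_i}\mapsto\ket{L_i(\alpha)}$. Such a map is well-defined and is an inner-product-preserving isomorphism of the two spans if and only if the Gram matrices of the two triples coincide, so the whole lemma reduces to a Gram-matrix matching problem in which $\alpha$ is the single free parameter.

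First I would record the diagonal entries. Each $\ket{\psi_i}$ is a product of unit vectors, hence is a unit vector; and $\|\ket{L_i(\alpha)}\|^2=(1-\alpha)+\alpha=1$ by direct computation. So both Gram diagonals are identically $1$. Because the $\ket{\psi_i}$ form a $(U^{\otimes N})$-orbit of period three and the $\ket{L_i(\alpha)}$ form an analogous trine about the $\ket{2}$-axis, both Gram matrices are circulant, and it suffices to match a single off-diagonal entry.

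For the off-diagonals on the qubit side, the excerpt already records $\ip{\psi_i}{\psi_j}=(-1/2)^N$ for $i\neq j$: this follows from $\bra{0}U\ket{0}=-e^{2\pi i/N}/2$, whence $\ip{\psi_0}{\psi_1}=(\bra{0}U\ket{0})^N=(-1)^Ne^{2\pi i}(1/2)^N=(-1/2)^N$, and the remaining overlaps follow by the cyclic symmetry. For the lifted trine, I would compute
\[
\ip{L_0(\alpha)}{L_1(\alpha)}=(1-\alpha)\bigl(-\tfrac{1}{2}\bigr)+\alpha=\frac{3\alpha-1}{2},
\]
and the same value for the other two pairs. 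Equating $\frac{3\alpha-1}{2}=(-1/2)^N$ and solving gives precisely $\alpha=\frac{1}{3}\bigl(1-(-\tfrac{1}{2})^{N-1}\bigr)$, as required.

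There is essentially no substantial obstacle; the only items to double-check are (a) that both triples really span $3$-dimensional subspaces (so the dimension count for $\varphi$ is correct), which follows because the common Gram matrix has eigenvalues $1+2(-1/2)^N$ and $1-(-1/2)^N$ (the latter doubly degenerate), all nonzero for every $N\geq 1$; and (b) the phase bookkeeping in $(\bra{0}U\ket{0})^N$, where one must use the prescribed global phase $-e^{2\pi i/N}$ of $U$ to confirm that the $N$-th power really produces $(-1/2)^N$ and not some residual root of unity. Once these are in hand, the assignment $\ket{\psi_i}\mapsto\ket{L_i(\alpha)}$ extends uniquely by linearity to the required inner-product-preserving isomorphism $\varphi$.
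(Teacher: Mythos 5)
Your proof is correct, but it takes a genuinely different route from the paper's. You reduce the lemma to a Gram-matrix matching problem: both triples have unit norms and constant real off-diagonal overlaps, so equating $\tfrac{3\alpha-1}{2}=(-1/2)^N$ and invoking the standard fact that tuples with identical Gram matrices are related by an isometry of their spans finishes the job. The paper instead builds the isomorphism constructively: it defines the normalized axis vector $\ket{z}\propto\sum_{i=0}^{2}\ket{\psi_i}$ (with normalization constant $\kappa=1-(-1/2)^{N-1}$), splits each state as $\ket{\psi_i}=\sqrt{1-\kappa/3}\,\ket{\omega_i}+\sqrt{\kappa/3}\,\ket{z}$, verifies $\ip{\omega_i}{\omega_j}=-1/2$ and $\sum_{i=0}^{2}\ket{\omega_i}=0$, and then identifies $\ket{\omega_i}$ with the planar trine and $\ket{z}$ with $\ket{2}$, so that $\alpha=\kappa/3$ emerges from the normalization rather than as the solution of an equation. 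Your version is more economical for the lemma itself; the paper's version pays off afterwards, because the objects $\ket{\omega_i}$, $\ket{z}$ and the relation $\sum_{i}\ket{\omega_i}=0$ are precisely what the subsequent theorem's LOCC-impossibility argument manipulates (e.g.\ the optimal measurement vectors are written as $\sqrt{2/3}\,\ket{\omega_i}+\sqrt{1/3}\,\ket{z}$). One small correction: your parenthetical claim that the Gram eigenvalue $1+2(-1/2)^N$ is nonzero for every $N\geq 1$ fails at $N=1$, where it vanishes and both triples span only two dimensions. This does not damage your argument, since Gram-matrix equality yields a well-defined isometry of spans irrespective of their dimension; note that the paper's own construction also implicitly assumes $N\geq 2$, since $\kappa=0$ at $N=1$ makes $\ket{z}$ undefined.
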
 
\begin{IEEEproof} 
To construct this equivalence, first let 
\begin{equation}
\label{Eq:z}
\ket{z}=(3\kappa)^{-1/2}\sum_{i=0}^2\ket{\psi_i}
\end{equation}
where $\kappa$ is a normalization constant that will be given later.
This defines our ``axis'' of rotation similar to $\ket{2}$ in the lifted trine.  Using the symmetry we have that 
\begin{align*}
\ip{\psi_i}{z}&=(3\kappa)^{-1/2}\sum_{j=0}^2\bra{\psi_0}(U^{j})^{\otimes N}\ket{\psi_0}\\
&=(3\kappa)^{-1/2}(1+2(-1/2)^N),
\end{align*}
where $\kappa=1-(-1/2)^{N-1}$ is defined so that 
$\ip{z}{z}=(3\kappa)^{-1}[3(1+2(-1/2)^N)]=1.$   
We also have $\ip{\psi_i}{z}=\sqrt{\kappa/3}$.
We next define 
\[
\ket{\omega_i}:=\frac{\ket{\psi_i}-\ket{z}\ip{z}{\psi_i}}{\sqrt{1-\kappa/3}}
\]
 so that $\ip{\psi_i}{\omega_i}=\sqrt{1-\kappa/3}$.  Therefore,
\[\ket{\psi_i}=\sqrt{1-\kappa/3}\ket{\omega_i}+\sqrt{\kappa/3}\ket{z}.\]
We have 
\begin{align*}
\ip{\omega_i}{\omega_j}&=\frac{\ip{\psi_i}{\psi_j}-\ip{\psi_i}{z}\ip{z}{\psi_j}}{1-\kappa/3}\\
&=\frac{(-1/2)^N-\kappa/3}{1-\kappa/3}=-1/2.
\end{align*}
The $\{\ket{\omega_i}\}$ satisfy the important property that $\sum_{i=0}^2\ket{\omega_i}=0$, and thus they constitute a linearly dependent set.  We can then establish an isomorphic mapping $\varphi$ by $\ket{\omega_i}\leftrightarrow U^i\ket{0}$ and $\ket{z}\leftrightarrow \ket{2}$.  As a result, the $N$-copy trine states can be identified with the lifted trine states $\ket{\psi_i}\leftrightarrow\ket{L_i(\alpha)}$ with lifting angle $\alpha=\kappa/3=1/3(1-(-1/2)^{N-1})$.  
\end{IEEEproof}

\begin{theorem}
For any finite $N$, $\{\ket{\psi_i}\}_{i=0}^2$ cannot be optimally distinguished by LOCC.
\end{theorem}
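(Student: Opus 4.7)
The plan is to invoke Proposition~\ref{Prop1} together with the cyclic symmetry of the ensemble to pin down the unique optimal measurement, and then show that this measurement is not implementable by $N$-party LOCC. By Proposition~\ref{Prop1}, an LOCC protocol achieves the minimum-error probability iff some LOCC POVM perfectly distinguishes an orthonormal triple $\{\ket{\phi_i}\}_{i=0}^{2}$ spanning $\mathrm{span}\{\ket{\psi_i}\}$, and this triple is unique. Because the ensemble is invariant (up to relabeling) under conjugation by $V := U^{\otimes N}$ and the three states carry equal priors, uniqueness forces $\ket{\phi_i}=V^{i}\ket{\phi_0}$, with $\ket{\phi_0}$ permutation-symmetric across the $N$ parties.

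Next I identify $\ket{\phi_0}$ explicitly. Under the isomorphism $\varphi$ of Lemma~\ref{lem-iso}, the problem becomes that of optimally discriminating the uniform lifted-trine ensemble $\{\ket{L_i(\alpha)}\}$ in $\mathbb{C}^{3}$. For this cyclically covariant, linearly-independent ensemble, a direct computation of the pretty-good measurement $\bar\rho^{-1/2}\ket{L_i(\alpha)}$ (with $\bar\rho$ the ensemble average) produces orthogonal rank-one projectors onto
\[
\ket{\phi_i^{L}} \;=\; \sqrt{\tfrac{2}{3}}\,U_L^{i}\ket{0} + \sqrt{\tfrac{1}{3}}\,\ket{2},
\]
where $U_L$ is the qutrit rotation underlying the trine; by Proposition~\ref{Prop1} this must coincide with the unique optimal detection basis. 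Mapping back through $\varphi^{-1}$,
\[
\ket{\phi_0} \;=\; \sqrt{\tfrac{2}{3}}\,\ket{\omega_0} + \sqrt{\tfrac{1}{3}}\,\ket{z},
\]
with $\ket{\omega_0}$ and $\ket{z}$ as in the proof of Lemma~\ref{lem-iso}.

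Finally, I argue that no $N$-party LOCC protocol perfectly distinguishes $\{V^{i}\ket{\phi_0}\}_{i=0}^{2}$. Consider any hypothetical protocol and its first non-trivial measurement round; by the permutation symmetry of $\ket{\phi_0}$ I may assume this round consists of party~$1$ applying Kraus operators $\{A_\mu \otimes \mathbb{I}^{(2..N)}\}_\mu$. Perfect LOCC discrimination requires that for every outcome $\mu$ the post-measurement states remain pairwise orthogonal, i.e.
\[
\bra{\phi_0}\, V^{-i}\bigl(A_\mu^\dagger A_\mu \otimes \mathbb{I}^{(2..N)}\bigr) V^{j} \,\ket{\phi_0} \;=\; 0 \qquad \forall\, i\neq j.
\]
Using the two-term structure of $\ket{\phi_0}$ together with the cyclic rotation induced by $V$, one should be able to solve this system and conclude that $A_\mu^\dagger A_\mu \propto \mathbb{I}$ on party~$1$'s Hilbert space, rendering the first measurement trivial. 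Iterating this argument round by round then contradicts the assumed perfect LOCC discriminator.

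The main obstacle is the last step: rigorously ruling out every non-trivial first-round measurement demands an explicit calculation of the operators $\tr_{2..N}\bigl[V^{-i}(\,\cdot\,\otimes\mathbb{I})\, V^{j}\,\proj{\phi_0}\bigr]$ on party~$1$'s qubit, and careful exploitation of the cyclic relations among them. A cleaner alternative, which I would pursue in parallel, is to coarse-grain parties $2,\ldots,N$ into a single composite party and attack the resulting $2 \otimes 2^{N-1}$ bipartite LOCC discrimination via an analogue of Proposition~\ref{Prop2} for $2 \otimes n$ systems; the rigid symmetric form of $\ket{\phi_0}$ makes this a natural reduction point back to the $N=2$ double-trine obstruction.
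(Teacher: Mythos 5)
Your setup is exactly the paper's: reduce via Proposition~\ref{Prop1}, map to Shor's lifted trine through Lemma~\ref{lem-iso}, use covariance to identify the pretty-good measurement $\ket{F_i}=\sqrt{2/3}\,\ket{\omega_i}+\sqrt{1/3}\,\ket{z}$ as the unique optimal detection basis, and then try to show that a nontrivial, orthogonality-preserving first local measurement cannot exist. Your orthogonality-preservation condition is also the right one, and for a qubit it is in fact \emph{equivalent} to the Walgate--Hardy condition the paper invokes: if some outcome has $H_\mu=A_\mu^\dagger A_\mu\not\propto\mathbb{I}$ with eigenbasis $\{\ket{b_0},\ket{b_1}\}$, then writing $\ket{F_i}=\ket{b_0}\ket{\eta^i_0}+\ket{b_1}\ket{\eta^i_1}$, the two linear relations $h_0\ip{\eta^i_0}{\eta^j_0}+h_1\ip{\eta^i_1}{\eta^j_1}=0$ and $\ip{\eta^i_0}{\eta^j_0}+\ip{\eta^i_1}{\eta^j_1}=0$ with $h_0\neq h_1$ force $\ip{\eta^i_0}{\eta^j_0}=\ip{\eta^i_1}{\eta^j_1}=0$ for $i\neq j$. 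So ruling out a nontrivial first round is the same as ruling out such a basis.

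The genuine gap is that you never rule it out: ``one should be able to solve this system and conclude $A_\mu^\dagger A_\mu\propto\mathbb{I}$'' is precisely the content of the theorem, and you flag it yourself as the main obstacle. The paper closes this with a short symmetry argument that your proposal lacks: sum the decompositions $\ket{F_i}=\ket{b_0}\ket{\eta^i_0}+\ket{b_1}\ket{\eta^i_1}$ over $i$ and use $\sum_{i=0}^2\ket{\omega_i}=0$ to get $\ket{z}=\sqrt{1/3}\,(\ket{b_0}\ket{v_0}+\ket{b_1}\ket{v_1})$ with $\ket{v_m}=\sum_i\ket{\eta^i_m}$; note $\ket{v_0},\ket{v_1}$ must be linearly independent because $\ket{z}$ is not a product state across this cut (contract $\ket{z}$ with $\ket{0}$ on all but two parties to see it is entangled); observe that $\ip{b_0}{z}$ and $\ip{b_1}{z}$, hence $\ket{v_0},\ket{v_1}$, are invariant under $U^{\otimes N-1}$; then the $U^{\otimes N}$-invariance of $\ket{z}$ together with linear independence forces $U\ket{b_0}=\ket{b_0}$, a contradiction. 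Without this (or some equally rigorous substitute), your argument is a plan, not a proof. Your ``cleaner alternative'' of coarse-graining parties $2,\dots,N$ into one party is logically sound as a reduction (bipartite LOCC impossibility implies $N$-party impossibility, and Walgate--Hardy gives an iff for $2\otimes n$), but it lands you at the very same unproven step: showing no basis of the single qubit yields orthogonal residuals.
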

\begin{IEEEproof}
Via the isomorphism $\varphi$ given in Lemma~\ref{lem-iso}, the optimal POVM for distinguishing the $\ket{\psi_i}$ can be found by solving the problem for the $\ket{L_i(\alpha)}$.  Since the $\ket{L_i(\alpha)}$ represent a $U$-covariant ensemble, the optimal measurement is given by the so-called ``pretty good measurement'' (PGM)\footnote{Recall that the ``Pretty Good Measurement'' for an ensemble $\{\ket{\phi_i},p_i\}_{i=1}^k$ is the POVM with elements \unexpanded{$p_i\rho^{-1/2}\op{\phi_i}{\phi_i}\rho^{-1/2}$, where $\rho=\sum_{i=1}^kp_i\op{\phi_i}{\phi_i}$} \cite{Hausladen-1994a}.}.  The optimal POVM is given by the orthogonal projectors $\{\op{f_i}{f_i}\}_{i=0}^2$ where
\begin{equation*}
\ket{f_i}:=\sqrt{\frac{2}{3}}U^i\ket{0}+\sqrt{\frac{1}{3}}\ket{2}\leftrightarrow \ket{F_i}:=\sqrt{\frac{2}{3}}\ket{\omega_i}+\sqrt{\frac{1}{3}}\ket{z}.
\end{equation*}
The success probability is given by 
\[
\frac{1}{3}\sum_{i=0}^2|\ip{F_i}{\psi_i}|^2=\left(\sqrt{\frac{2}{3}}\sqrt{1-\kappa/3}+\sqrt{\frac{1}{3}}\sqrt{\kappa/3}\right)^2.
\]  
Note that this goes to 1 as $N\to\infty$, which reflects the fact that the $\ket{\psi_i}$ become orthogonal in the asymptotic limit.

We now want to prove that the $\ket{F_i}$ cannot be perfectly distinguished by $N$-party LOCC.  By Proposition \ref{Prop1}, this sufficies for proving that LOCC is unable to optimally distinguish $N$-copies of the trine states.  Following the work of Ref. \cite{Walgate-2002a}, the POVM states can be perfectly distiniguished only if there exists an orthornormal basis $\{\ket{b_0},\ket{b_1}\}$ for some party such that, {for $i=0,1,2$} 
\[
\ket{F_i}=\sqrt{2/3}\ket{\omega_i}+\sqrt{1/3}\ket{z}=\ket{b_0}\ket{\eta^i_0}+\ket{b_1}\ket{\eta^i_1}
\]
and $\ip{\eta^i_0}{\eta^j_0}=\ip{\eta^i_1}{\eta^j_1}=0$ for $i\not=j$.  Summing over the $\ket{F_i}$ and using the fact that $\sum_{i=0}^2\ket{\omega_i}=0$ gives $\ket{z}=\sqrt{1/3}(\ket{b_0}\ket{v_0}+\ket{b_1}\ket{v_1})$ where $\ket{v_0}=\sum_{i=0}^2\ket{\eta^i_0}$ and $\ket{v_1}=\sum_{i=0}^2\ket{\eta^i_1}$.  It is clear that $\ket{v_0}$ and $\ket{v_1}$ are linearly independent or else by the overall symmetry, $\ket{z}$ would be an $N$-partite product state, which it's not (this can be seen by simply contracting $\ket{z}$ in Eq. \eqref{Eq:z} with $\ket{0}$ for all but two parties and observing that the remaining bipartite state is entangled).  Furthermore, from the definition of $\ket{z}$, we have that both $\ip{b_0}{z}$ and $\ip{b_1}{z}$ are invariant under the action of $U^{\otimes N-1}$.  Hence so are $\ket{v_0}$ and $\ket{v_1}$.  But then the invariance of $\ket{z}$ under $U^{\otimes N}$ and the linear independence of the $\ket{v_0}$ and $\ket{v_1}$ imply that $\ket{b_0}=U\ket{b_0}$, which is impossible.  Therefore, the $\ket{F_i}$ cannot be decomposed as required for LOCC distinguishability.  
\end{IEEEproof}

What's particularly interesting about the $N$-copy trine ensemble is that LOCC \textit{can} distinguish the states optimally as $N\to\infty$.  To see this, consider the single party POVM consisting of three projectors $\{\op{\phi_i}{\phi_i}\}_{i=0}^2$ where $\ket{\phi_i}=U^i\ket{1}$.  Note that for each outcome, one of the the trine states is eliminated.  Hence, consider the LOCC protocol where each party performs this measurement and then globally communicate their results.  The probability that they are unable to eliminate two of the states (i.e. they all eliminate the same state) is $(1/3)^N\to 0$.

This result is quite interesting when one considers the $N$-copy problem for two pure state ensembles.  It has been proven that $N$-party LOCC can always obtain the optimal success probability \cite{Acin-2005a}.  

\subsubsection{$N$-Ensemble Copies}

\label{Sect:N Ensemble}

We finally consider ensembles of product states that can be formed by taking $N$ copies of one particular ensemble.  In other words, the ensemble has the decomposition $\{\rho_i, p_i\}_{i=1}^{k^N}=\big(\{\sigma_i,q_i\}_{i=1}^k\big)^{\otimes N}$ (compare this to the previous section).  The following is a very simple observation.  Let $\{\Pi_i\}^k_{i=1}$ be the optimal POVM for the underlying ensemble $\{\sigma_i,q_i\}_{i=1}^k$.  Then it is clear that $\Lambda\geq q_i\sigma_i$ for all $i$ implies that $\Lambda^{\otimes 2}\geq q_iq_j\sigma_i\otimes \sigma_j$ for all $i,j$.  By induction we then see that  $\big(\{\Pi_i\}_{i=1}^k\big)^{\otimes N}$ is the optimal POVM for the ensemble $\{\rho_i, p_i\}_{i=1}^{k^N}$.  In other words, in the $N$-party LOCC setting, the ensemble $\{\rho_i, p_i\}_{i=1}^{k^N}$ can be optimally distinguished by each party simply perfoming the same local POVM $\{\Pi_i\}_{i=1}^k$.  In fact, the same argument shows that we don't need the overall ensemble to be $N$ copies of the same underlying ensemble; rather it needs to only be the tensor product of $N$ ensembles.

\section{Unambiguous Discrimination}
\label{Sect:Unambig}

We now consider the task of unambiguous discrimination for two-qubit pure states.  Recall that unambiguous discrimination of an ensemble $\{\ket{\psi_i},p_i\}_{i=1}^n$ consists of a POVM $\{\Pi_i\}_{i=0}^{n}$ with $n+1$ outcomes.  Each state is assigned one of the $n$ outcomes, and the remaining outcome $\Pi_0$ corresponds to an inconclusive or ambiguous conclusion.  The constraint is that for each of the unambiguous outcomes, there is no decision error.  In other words, $\bra{\psi_i}\Pi_j\ket{\psi_i}=0$ for $i\not=j>0$.  The task is to choose a POVM that minimizes the probability of an inconclusive outcome.  Explicitly, the problem is
\begin{align}
\label{Eq:Cons}
\min_\Pi\quad& q=\sum_{i=1}^np_i\bra{\psi_i}\Pi_0\ket{\psi_i}\notag\\
s.t.\quad&\bra{\psi_i}\Pi_j\ket{\psi_i}=0\quad i\not=j>0.
\end{align}
Not all ensembles will allow for a feasible solution, and unambiguous discrimination is possible if and only if the states are linearly independent \cite{Chefles-1998b}.  For such ensembles, the elements $\Pi_i$ are easy to characterize.  Let $S$ be the subspace spanned by the $\ket{\psi_i}$, and take $\ket{\varphi_i}$ to be an orthonormal basis of $S$.  With respect to this basis, let $R$ be the $dim(S)\times dim(S)$ matrix whose columns are the $\ket{\psi_i}$.  Linear independence of the $\ket{\psi_i}$ ensures invertibility of $R$.  After normalization, the rows of $R^{-1}$ are the states $\ket{\widetilde{\psi}_i}$ with the property that $\ip{\widetilde{\psi}_i}{{\psi}_j}=0$ for $i\not=j$.  Furthermore, up to an overall factor, these are the unique vectors in $S$ with this property\footnote{Note, here we demand that the dual states $\ket{\widetilde{\psi}_i}$ are normalized in contrast to the vectors used in the proof of Proposition \ref{Prop1}.}.  As a result, the POVM element $\Pi_i$ for unambiguous discrimination must have support on $\widetilde{S_i}$, which is the space spanned by $\ket{\widetilde{\psi}_i}$ and elements lying in $S^\perp$.  

For unambiguous discrimination by LOCC, there is an added constraint to \eqref{Eq:Cons} that each of the POVM must be realized by an LOCC protocol.  In general, this added constraint will often make the problem infeasible.  For general pure state ensembles, Chefles has shown that unambiguous discrimination is possible by LOCC iff for each $\ket{\psi_i}$, there exists a product state in $\mathcal{S}_i^\perp$  \cite{Chefles-2004a}.  Subsequent work on LOCC unambiguous discrimination was conducted in References \cite{Duan-2007b, Bandyopadhyay-2009a}.  

When the ensemble under consideration consists of four two-qubit pure states, unambiguous discrimination by LOCC becomes possible iff each of the $\ket{\widetilde{\psi_i}}$ are product states \cite{Duan-2007b}.  Of course, four orthogonal product state can be perfectly distinguished by LOCC.  But a non-trivial example consists of the ensemble
\begin{align}
\ket{\psi_1}&=\ket{00},&\ket{\psi_2}&=\ket{0+},&\ket{\psi_3}&=\ket{+0},&\ket{\psi_4}&=\ket{++},
\end{align}
where $\ket{\pm}=\sqrt{1/2}(\ket{0}\pm\ket{1})$.  The states necessary for LOCC unambiguous discrimination are
\begin{align}
\ket{\widetilde{\psi}_1}&=\ket{--},&\ket{\widetilde{\psi}_2}&=\ket{-1},&\ket{\widetilde{\psi}_3}&=\ket{1-},&\ket{\widetilde{\psi}_4}&=\ket{11}.
\end{align}
On the other hand, the linearly independent states 
\begin{align}
\label{Eq:4unambig}
\ket{\psi_1}&=\ket{00},&\ket{\psi_2}&=\ket{0+},&\ket{\psi_3}&=\ket{+0},&\ket{\psi_4}&=\ket{11}
\end{align}
cannot be unambiguously discriminated by LOCC, while they can be by a global POVM.  To see LOCC impossibility directly, note that the state lying orthogonal to $\ket{\psi_2}$, $\ket{\psi_3}$ and $\ket{\psi_4}$ is $\sqrt{1/3}(\ket{00}+\ket{01}+\ket{10})$, which is entangled.  Thus, we can conclude that ensemble \eqref{Eq:4unambig} demonstrates a type of nonlocality without entanglement, at least with respect to unambiguous discrimination \cite{Duan-2007b}.

In contrast to four-state ensembles, the ability to unambiguously discriminate three states by global operations implies feasibility by LOCC \cite{Bandyopadhyay-2009a}.  This follows from the fact that any two-dimensional subspace in $\mathbb{C}^2\otimes\mathbb{C}^2$ contains at least one product state \cite{Sanpera-1998a}.  Nevertheless, there is still the question of whether the maximum global unambiguous probability $1-q$ can be achieved by LOCC.

\subsection{Symmetric Ensembles}
 
For symmetric product state ensembles, we can obtain an upper bound on the conclusive probability.  By symmetric states, we mean those that are invariant under the SWAP operation $\mathbb{F}$, which acts on any product state $\ket{\alpha\beta}$ by $\mathbb{F}\ket{\alpha\beta}=\ket{\beta\alpha}$.

\begin{theorem}
\label{Thm:unambigSymm}
Let $\{\ket{\psi_i},p_i\}_{i=1...3}$ be an ensemble of two-qubit linearly independent symmetric pure states with $\ket{\widetilde{\psi}_i}$ being dual states satisfying $\ip{\widetilde{\psi}_i}{\psi_j}=0$ for $i\not=j$.  If $C(\widetilde{\psi}_i)\geq |\ip{\widetilde{\psi}_i}{\psi_i}|^2$ for all $i$, then LOCC cannot obtain an unambiguous probability greater than $p_{max}:=\max\{p_1,p_2,p_3\}$.
\end{theorem}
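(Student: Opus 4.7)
The plan is to exploit the SWAP symmetry to pin down the exact structure of every LOCC POVM element, and then extract the bound from a single inequality evaluated on the antisymmetric state $\ket{\Psi^-}$. First, I would observe that because the three linearly independent symmetric states span the entire three-dimensional symmetric subspace of $\mathbb{C}^2\otimes\mathbb{C}^2$, the common orthogonal complement is one-dimensional and spanned by $\ket{\Psi^-}$; consequently each dual $\ket{\widetilde{\psi}_i}$ also lies in the symmetric subspace, and the support of any unambiguous POVM element $\Pi_i$ is contained in the two-dimensional subspace $\widetilde{S}_i:=\text{span}\{\ket{\widetilde{\psi}_i},\ket{\Psi^-}\}$, which is built from one symmetric and one antisymmetric vector.

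Next, I would determine the product states in each $\widetilde{S}_i$. In the matrix representation $\ket{\psi}=(\mathbb{I}\otimes\psi)\ket{\Psi^+}$, symmetric pure states correspond to symmetric $2\times 2$ matrices while $\ket{\Psi^-}$ corresponds to an antisymmetric matrix; a direct check gives $\det(S+T)=\det(S)+\det(T)$ for symmetric $S$ and antisymmetric $T$, and so solving the concurrence-zero condition exhibits $\widetilde{S}_i$ as containing exactly the two product states
\[
\ket{e_i^{\pm}}=\frac{1}{\sqrt{1+C(\widetilde{\psi}_i)}}\Bigl(\ket{\widetilde{\psi}_i}\pm i\,e^{i\phi_i/2}\sqrt{C(\widetilde{\psi}_i)}\,\ket{\Psi^-}\Bigr),
\]
where $e^{i\phi_i}$ is the phase of $\det(\widetilde{\psi}_i)$. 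Since $\ket{\Psi^-}$ is entangled, $\widetilde{S}_i$ cannot be of tensor-product form, so any separable operator supported on $\widetilde{S}_i$ must be a nonnegative combination of these two projectors; in particular the separability inherited from LOCC forces $\Pi_i=c_i^{+}\proj{e_i^{+}}+c_i^{-}\proj{e_i^{-}}$ for some $c_i^{\pm}\geq 0$.

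The final step is a short calculation. Using the orthogonality $\ip{\psi_i}{\Psi^-}=0$ coming from symmetry, I would compute in parallel
\[
\bra{\psi_i}\Pi_i\ket{\psi_i}=\frac{(c_i^{+}+c_i^{-})\,|\ip{\widetilde{\psi}_i}{\psi_i}|^2}{1+C(\widetilde{\psi}_i)},\qquad \bra{\Psi^-}\Pi_i\ket{\Psi^-}=\frac{(c_i^{+}+c_i^{-})\,C(\widetilde{\psi}_i)}{1+C(\widetilde{\psi}_i)}.
\]
Contracting the POVM constraint $\sum_i\Pi_i\leq\mathbb{I}$ against $\ket{\Psi^-}$ yields $\sum_i\bra{\Psi^-}\Pi_i\ket{\Psi^-}\leq 1$, and the hypothesis $C(\widetilde{\psi}_i)\geq|\ip{\widetilde{\psi}_i}{\psi_i}|^2$ upgrades this term-by-term to $\sum_i\bra{\psi_i}\Pi_i\ket{\psi_i}\leq 1$. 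Bounding each $p_i\leq p_{max}$ then gives $\sum_i p_i\bra{\psi_i}\Pi_i\ket{\psi_i}\leq p_{max}$, the claimed bound.

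I expect the main technical obstacle to be the rigorous product-state characterization in the second step: I must verify both that $\widetilde{S}_i$ contains exactly the two product states $\ket{e_i^{\pm}}$ (using that $\ket{\Psi^-}$ prevents $\widetilde{S}_i$ from being a ``product-type'' subspace of the form $\ket{a}\otimes\mathbb{C}^2$ or $\mathbb{C}^2\otimes\ket{b}$) and that consequently every separable operator supported on $\widetilde{S}_i$ decomposes as a nonnegative combination of $\proj{e_i^{\pm}}$ rather than admitting any more exotic decomposition. Once this structural fact is in hand, the rest is a clean application of one SDP inequality combined with the symmetric/antisymmetric orthogonality.
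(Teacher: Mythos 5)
Your proposal is correct and takes essentially the same route as the paper's proof: relax LOCC to separable POVMs, show that the only product states in $\widetilde{S}_i=\mathrm{span}\{\ket{\widetilde{\psi}_i},\ket{\Psi^-}\}$ are the two states you exhibit, deduce that each conclusive element is a nonnegative combination of those two product projectors, and contract the constraint $\sum_i\Pi_i\leq\mathbb{I}$ against $\ket{\Psi^-}$ to get the bound $p_{max}$. The only differences are cosmetic and in your favor: by allowing unequal weights $c_i^{\pm}$ you bypass the paper's SWAP-symmetrization step $\Pi_i\mapsto(\Pi_i+\mathbb{F}\Pi_i\mathbb{F})/2$ (used there only to kill the cross terms $b_i$), and your coefficient $\pm i e^{i\phi_i/2}\sqrt{C(\widetilde{\psi}_i)}$ is the precise one — the paper loosely writes these product states as $\ket{\widetilde{\psi}_i}\pm C(\widetilde{\psi}_i)\ket{\Psi^-}$, which is off by a square root, although its subsequent form $\Pi_i=\alpha_i(\proj{\widetilde{\psi}_i}+C(\widetilde{\psi}_i)\proj{\Psi^-})$ agrees with your (correct) calculation.
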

\begin{proof}
We will prove that this theorem holds for the more general class of separable operations.  Let $\ket{\Psi^-}=\frac{1}{\sqrt{2}}(\ket{01}-\ket{10})$ be the anti-symmetric state lying orthogonal to the ensemble states.  Then, the conclusive POVM elements $\Pi_i$ must take the form
\[
\Pi_i=a_i\op{\widetilde{\psi_i}}{\widetilde{\psi_i}}+b_i\op{\Psi^-}{\widetilde{\psi_i}}+b_i^*\op{\widetilde{\psi_i}}{\Psi^-}+c_i\op{\Psi^-}{\Psi^-}.
\]
The total success probability of this POVM is given by $\sum_{i=1}^3a_ip_i|\ip{\widetilde{\psi_i}}{\psi_i}|^2$.  Note that the $\ket{\widetilde{\psi_i}}$ lie in the symmetric subspace.   Our first task is to show that we can take $b_i=0$ without loss of generality.  For the $\Pi_i$ to be an separable POVM, we need that the $\Pi_i$ can be expressed as a positive sum of product states.  Then if $\Pi_i$ is separable, so is $\mathbb{F}\Pi_i\mathbb{F}$ as well as the group projection $\tau(\Pi_i):=(\Pi_i+\mathbb{F}\Pi_i\mathbb{F})/2$.  Furthermore, we have $\sum_{i=1}^3\tau(\Pi_i)\leq\mathbb{I}$ and also $|\bra{\widetilde{\psi}_i}\mathbb{F}\ket{\psi_i}|^2=|\ip{\widetilde{\psi_i}}{\psi_i}|^2$.  Therefore, we can replace the separable POVM $\Pi_i$ with the separable POVM $\tau(\Pi_i)=a_i \op{\widetilde{\psi_i}}{\widetilde{\psi_i}}+c_i\op{\Psi^-}{\Psi^-}$, and the overall conclusive probability remains unchanged.  

Next, we compute the required values of $a_i$ and $c_i$ for each $\Pi_i$ to be separable.  It is not difficult to verify that, up to an overall constant, the only two product states lying in the span of $\ket{\widetilde{\psi}_i}$ and $\ket{\Psi^-}$ are $\ket{\widetilde{\psi}_i}\pm C(\widetilde{\psi}_i)\ket{\Psi^-}$, where $C(\widetilde{\psi}_i)$ is the concurrence of $\ket{\widetilde{\psi_i}}$.  Thus, the separable $\Pi_i$ must take the form $\Pi_i=\alpha_i( \op{\widetilde{\psi}_i}{\widetilde{\psi}_i}+C(\widetilde{\psi}_i)\op{\Psi^-}{\Psi^-})$.  For $\sum_{i=1}^3\Pi_i\leq\mathbb{I}$, we need that $\sum_{i=1}^3\alpha_iC(\widetilde{\psi_i})\leq 1$.  On the other hand, the overall conclusive probability is $\sum_{i=1}^3p_i\alpha_i|\ip{\widetilde{\psi}_i}{\psi_i}|^2$.  Hence, if $C(\widetilde{\psi}_i)\geq |\ip{\widetilde{\psi}_i}{\psi_i}|^2$ for all $i$, then the total conclusive probability will be no greater than $p_{max}$.
\end{proof}
As a simple example of this is the equiprobable ensemble of the three symmetric Bell states $\sqrt{1/2}(\ket{00}\pm\ket{11})$ and $\sqrt{1/2}(\ket{01}+\ket{10})$ \cite{Ghosh-2001a}.  Here $\ket{\widetilde{\psi_i}}=\ket{\psi_i}$, and the conditions of Theorem \ref{Thm:unambigSymm} are met.  Hence, the LOCC conclusive probability cannot exceed $1/3$ while the global conclusive probability is $1$.

\subsection{The Double Trine Ensemble}

The converse to Theorem \ref{Thm:unambigSymm} does not hold in general.  As an interesting example, we consider the double trine ensemble (\ref{eq-doubletrine}), and show that LOCC and SEP obtain the same maximum conclusive probability, which turns out to be less than the optimal probability feasible by global operations.  Thus, the double trine ensemble demonstrates a very curious distinguishability property: For minimum-error discrimination, LOCC $<$ SEP $=$ GLOBAL; For optimal unambiguous discrimination, LOCC $=$ SEP $<$ GLOBAL.

\subsubsection*{Global and Separable Operations}
The dual states of $\ket{\psi_i}=\ket{s_i}\otimes \ket{s_i}$ in (\ref{eq-doubletrine}) can be computed as
\begin{align}
\ket{\widetilde{\psi}_0}&=\tfrac{3}{\sqrt{10}}\ket{00}-\tfrac{1}{\sqrt{10}}\ket{11}\notag\\
\ket{\widetilde{\psi}_1}&=-\sqrt{\tfrac{3}{10}}(\ket{01}+\ket{10})+\sqrt{\tfrac{2}{5}}\ket{11}\notag\\
\ket{\widetilde{\psi}_2}&=\sqrt{\tfrac{3}{10}}(\ket{01}+\ket{10})+\sqrt{\tfrac{2}{5}}\ket{11}\notag.
\end{align}
Using the $V=U\otimes U$ symmetry, we can further simplify the problem.  Note that $\{\ket{\widetilde{\psi}_i}\}$ also demonstrate the symmetry $\ket{\widetilde{\psi}_i}=V^i\ket{\widetilde{\psi}_0}$.  Consequently, we have
\[
\frac{1}{3} \sum_{k=0}^2\bra{\psi_k}\Pi_k\ket{\psi_k} = \frac{1}{3}\sum_{k=0}^2\bra{\psi_k}\Pi_k+V\Pi_{k-1}V^\dagger+V^2\Pi_{k-2}(V^\dagger)^2\ket{\psi_k},
\]
and so we can replace any POVM $\{\Pi_0,\Pi_1,\Pi_2,\Pi_?\}$ by
\begin{align}
\label{Eq:POVMnew}
\hat\Pi_0&=1/3(\Pi_0+V\Pi_2V^\dagger+V^2\Pi_1 V^\dagger)\notag\\
\hat\Pi_1&=V\hat\Pi_0 V^\dagger\notag\\
\hat\Pi_2&=V^2\hat\Pi_0(V^\dagger)^2\notag\\
\hat\Pi_?&=1/3(\Pi_?+V\Pi_?V^\dagger+V^2\Pi_? (V^\dagger)^2\notag\\
&=\mathbb{I}-\Pi_0-V\hat\Pi_0 V^\dagger-V^2\hat\Pi_0(V^\dagger)^2.
\end{align}
Thus, $\bra{\psi_k}\hat\Pi_k\ket{\psi_k}$ is constant for all $k$, and so without loss of generality our problem is the following:
\begin{align}
\max_{\Pi\geq 0} \quad&\bra{\psi_0}\Pi\ket{\psi_0}\notag\\
\text{such that:}\quad& {{supp}}(\Pi)=\widetilde{S}_0\notag\\
&\sum_{k=0}^2V^k\Pi(V^\dagger)^k\leq\mathbb{I}.
\end{align}

For a separable POVM $\{\Pi_0,\Pi_1,\Pi_2,\Pi_?\}$, each $\Pi_k$ is a convex combination of rank one product projectors, and since $U\otimes U$ maps product states to product states, the modified POVM $\{\hat{\Pi}_0,\hat{\Pi}_1,\hat{\Pi}_2,\hat{\Pi}_?\}$ given by Eq. \eqref{Eq:POVMnew} will also be separable.  Hence, our new optimization problem is 
\begin{align}
\max_{\Pi\geq 0} \quad&\bra{\psi_0}\Pi\ket{\psi_0}\notag\\
\text{such that:}\quad& supp(\Pi)=\widetilde{S}_0\notag\\
&\sum_{k=0}^2V^k\Pi(V^\dagger)^k\leq\mathbb{I}\notag\\
&\text{$\Pi$ is separable}\notag\\
&\text{$\mathbb{I}-\sum_{k=0}^2V^k\Pi(V^\dagger)^k$ is separable}.
\end{align}
We take 
\begin{equation}
\label{Eq:POVMgeneral}
\Pi=a\op{\widetilde{\psi}_0}{\widetilde{\psi}_0}+b(\op{\widetilde{\psi}_0}{\Psi^-}+\op{\Psi^-}{\widetilde{\psi}_0})+c\op{\Psi^-}{\Psi^-}
\end{equation}
so that $\bra{\psi_0}\Pi\ket{\psi_0}=a$, the eigenvalues of $\Pi$ are
\begin{equation}
\label{Eq:POVM1}
\{\frac{1}{9}(5 a + 9 c \pm \sqrt{(5a-9c)^2 + 180 b^2})\}
\end{equation}
 and $\sum_{k=0}^2V^k\Pi(V^\dagger)^k$ has distinct eigenvalues of 
\begin{equation}
\label{Eq:POVM2}
 \{\frac{4}{3}a,\frac{1}{3}[a+9c\pm\sqrt{(a-9c)^2+36b^2}]\}.
\end{equation}
Putting aside the separability constraint, we thus see that the choice $a=3/4$ and $b=c=0$ is a feasible point which maximizes $\bra{\psi_0}\Pi\ket{\psi_0}$.  In other words, the optimal unambiguous probability for the double trine using global operations is $3/4$.

Now, to demand that $\Pi$ is separable, we compute its concurrence.  Recall that for a two-qubit mixed state $\rho$, its concurrence is given by $C(\rho)=\max\{0,\lambda_1^\downarrow-\lambda_2^\downarrow-\lambda_3^\downarrow-\lambda_4^\downarrow\}$ where the $\lambda_i^\downarrow$ are the square roots of the eigenvalues (in decreasing order) of the matrix $\rho\widetilde{\rho}$, where $\widetilde{\rho}=\sigma_y\otimes\sigma_y\rho^*\sigma_y\otimes\sigma_y$ \cite{Wootters-1998a}.  Without loss of generality, we can assume that $\Pi$ is real, and it will be separable if and only if its concurrence vanishes.  Since $\Pi$ is rank 2, this amounts to the two nonzero eigenvalues of $\Pi(\sigma_y\otimes \sigma_y)\Pi(\sigma_y\otimes\sigma_y)$ being equal.   Hence, we obtain the following constraint: 
\begin{align}
\label{Eq:SepCons1}
0&=(a - 3c)^2[(a+3c)^2-12b^2].
\end{align}
In addition, to this, we also need that $\Omega:=\mathbb{I}-\sum_{k=0}^2V^k\Pi(V^\dagger)^k$ is separable.  However, first let's focus on the optimization only under the constraint of \eqref{Eq:SepCons1}.  We are thus left with two cases: (i) $a=3c$ and (ii) $12b^2=(a+3c)^2$.  First consider case (ii).  Substituting into Eq. \eqref{Eq:POVM2}, the task is to maximize $a$ subject to $3\geq a+9c+2\sqrt{a^2+27c^2}$ and $a+9c-2\sqrt{a^2+27c^2}\geq 0$.  The maximum is obtained at the boundary, which is the point $a=3/8$, $b=\sqrt{3}/8$, and $c=1/8$.  On the other hand, for case (i) we maximize $a$ subject to $3\geq 4a+\sqrt{2a^2+36b^2}$ and $4a\geq\sqrt{2a^2+36b^2}$.  Again, optimality is obtained at the boundary, but this time with the point $a=1/2$, $b=0$, and $c=1/6$.

Now we turn to the operator $\Omega$.  The eigenvalues for $\Omega(\sigma_y\otimes \sigma_y)\Omega(\sigma_y\otimes\sigma_y)$ are 
\begin{align*}
t_1=&1-\frac{2}{3}a+\frac{2}{9} a^2 + 4 b^2 - 6 c + 18 c^2 \notag\\&+ 
 \frac{2}{9} (3 - a - 9 c)\sqrt{ (a-9c)^2 + 36 b^2}, \notag \\
 t_2=&1-\frac{2}{3}a+\frac{2}{9} a^2 + 4 b^2 - 6 c + 18 c^2 \notag \\ &-
 \frac{2}{9} (3 - a - 9 c)\sqrt{ (a-9c)^2 + 36 b^2},\\
t_3=&\frac{1}{9}(3-4a)^2,\notag\\ 
 t_4=&\frac{1}{9}(3-4a)^2.
\end{align*}
It can be verified that for the point $a=1/2$, $b=0$, and $c=1/6$ we have $t_1\to 4/9$, $t_2\to 0$, and $t_3=t_4\to 1/9$.  The concurrence of $\Omega$ is given by $\sqrt{t_1}-\sqrt{t_2}-\sqrt{t_3}-\sqrt{t_4}=0$.  Thus, the optimal point for separability of $\Pi$ is also a point in which $\Omega$ is separable.  So in summary, the optimal unambiguous probability for the double trine using separable operations is $1/2$.

\subsubsection*{LOCC Operations}
We next describe an LOCC protocol that also obtains an unambiguous probability of 1/2.  It is, in fact, the one described in Section \ref{Sect:Ncopy}.  Consider the states
\begin{align*}
\ket{\overline{s_0}}&=\ket{1}\notag\\
\ket{\overline{s_1}}&=\frac{\sqrt{3}}{2}\ket{0}+\frac{1}{2}\ket{1}\notag\\
\ket{\overline{s_2}}&=\frac{\sqrt{3}}{2}\ket{0}-\frac{1}{2}\ket{1}.
\end{align*}
Note that $|\ip{\widetilde{s_i}}{s_j}|=\sqrt{3}/2$ if $i\not=j$ and $0$ if $i=j$.  It can be verified that the set 
\[
\mathcal{P}=\left\{\frac{2}{3}\op{\overline{s_0}}{\overline{s_0}},\frac{2}{3}\op{\overline{s_1}}{\overline{s_1}},\frac{2}{3}\op{\overline{s_2}}{\overline{s_2}}\right\}
\] constitutes a valid POVM.  The protocol consists of Alice and Bob each performing the POVM $\mathcal{P}$ and comparing their results.  If they obtain different outcomes, then they know the state they share is the one distinct from each of their outcomes.  For instance if Alice obtains $\ket{\overline{s_0}}$ and Bob obtains $\ket{\overline{s_2}}$, then they can conclusively deduce that their state is $\ket{\psi_1}=\ket{s_1}\otimes\ket{s_1}$.  Thus, the only time they cannot determine their state is when they both obtain the same outcome.  This occurs with probability:
\begin{align}
\frac{1}{3}\cdot\sum_{i\not=j}\left(\frac{2}{3}\right)^2|\ip{\overline{s_i}}{s_j}|^4=\left(\frac{2}{3}\right)^2\cdot 2\cdot \frac{9}{16}=1/2.
\end{align}
Therefore, the optimal probability of unambiguous discrimination via LOCC is 1/2.

\section{Conclusion}
In this paper, we have provided conditions under which various ensembles of two-qubit states can either be perfectly or optimally distinguished by LOCC. These results significantly advance the current understanding of state discrimination for two-qubit ensembles. For perfect LOCC discrimination, we provide new instances of necessary and sufficient conditions that are much easier to verify than the condition given in \cite{Duan-2010a}. Additionally, we have provided a necessary and sufficient condition for which the two-qubit ensembles consisting of one pure state and one rank two mixed state can be perfectly distinguished by separable operations; thus completing the previously missing piece in the perfect distinguishability setting. With this, perfect discrimination of two-qubit ensembles by both LOCC and SEP operations is completely solved. 

Most notably, we have observed sharp distinctions between ensembles consisting of two states and those consisting of three states. First, we have shown that \emph{almost all} two-qubit ensembles consisting of three pure states cannot be optimally discriminated using LOCC; in contrast, \emph{any} two pure states can be optimally distinguished by LOCC \cite{Walgate-2000a}. Furthermore, we have demonstrated that the $N$-copy trine ensemble cannot be optimally distinguished by LOCC for any finite $N$. Again, this behavior is the complete opposite than if there were only two $N$-copy states, which can be optimally distinguished by LOCC  \cite{Acin-2005a}. 

We would like to emphsize the interesting connection between the $N$-copy trine ensemble for $N\geq3$ and Shor's lifted trine ensemble, where each positive integer $N$ corresponds to a certain lifting angle \cite{Shor-2002a}. This observation allows us to simplify the computation by mapping the $N$-qubit trine states of higher dimensions into a three-dimensional subspace in $\mathbb{R}^3$. 

Finally, we have also observed very bizarre distinguishability features for the double trine ensemble; namely, we have shown in this paper that for optimal unambiguous discrimination: LOCC$=$SEP$<$GLOBAL. This finding is rather different than a previously obtained result that LOCC$<$SEP$=$GLOBAL when minimum error discrimination is considered \cite{Chitambar-2013b}.  This raises the intriguing question of whether there exists certain ensembles for which LOCC$<$GLOBAL with respect to one performance measure but LOCC$=$GLOBAL with respect to another.  If the answer is positive, then the phenomenon of nonlocality without entanglement might not be a property that depends solely on the underlying states themselves.

\section*{Acknowledgement}
The authors would like to thank Debbie Leung and Laura Mancinska for helpful discussions on the topic of LOCC distinguishability. RD was supported in part by the Australian Research Council (ARC) under Grant DP120103776 (with A.Winter) and by the National Natural Science Foundation of China under Grants 61179030. He was also supported in part by an ARC Future Fellowship under Grant FT120100449. MH was supported by the Chancellor's postdoctoral research fellowship, University of Technology Sydney.

\bibliographystyle{IEEEtran}
\bibliography{QuantumBib}

\begin{thebibliography}{10}
\providecommand{\url}[1]{#1}
\csname url@samestyle\endcsname
\providecommand{\newblock}{\relax}
\providecommand{\bibinfo}[2]{#2}
\providecommand{\BIBentrySTDinterwordspacing}{\spaceskip=0pt\relax}
\providecommand{\BIBentryALTinterwordstretchfactor}{4}
\providecommand{\BIBentryALTinterwordspacing}{\spaceskip=\fontdimen2\font plus
\BIBentryALTinterwordstretchfactor\fontdimen3\font minus
  \fontdimen4\font\relax}
\providecommand{\BIBforeignlanguage}[2]{{%
\expandafter\ifx\csname l@#1\endcsname\relax
\typeout{** WARNING: IEEEtran.bst: No hyphenation pattern has been}%
\typeout{** loaded for the language `#1'. Using the pattern for}%
\typeout{** the default language instead.}%
\else
\language=\csname l@#1\endcsname
\fi
#2}}
\providecommand{\BIBdecl}{\relax}
\BIBdecl

\bibitem{Bennett-1993a}
C.~H. Bennett, G.~Brassard, C.~Cr\'epeau, R.~Jozsa, A.~Peres, and W.~K.
  Wootters, ``Teleporting an unknown quantum state via dual classical and
  {E}instein-{P}odolsky-{R}osen channels,'' \emph{Phys. Rev. Lett.}, vol.~70,
  no.~13, pp. 1895--1899, Mar 1993.

\bibitem{Bennett-1984a}
C.~H. Bennett and G.~Brassard, ``Quantum {C}ryptography: {P}ublic {K}ey
  {D}istribution and {C}oin {T}ossing,'' in \emph{Proceedings of the IEEE
  International Conference on Computers, Systems and Signal Processing}.\hskip
  1em plus 0.5em minus 0.4em\relax IEEE Press, 1984, pp. 175--179.

\bibitem{Bennett-1996a}
C.~H. Bennett, D.~P. DiVincenzo, J.~A. Smolin, and W.~K. Wootters,
  ``Mixed-state entanglement and quantum error correction,'' \emph{Phys. Rev.
  A}, vol.~54, no.~5, pp. 3824--3851, Nov 1996.

\bibitem{Werner-1989a}
R.~F. Werner, ``Quantum states with {E}instein-{P}odolsky-{R}osen correlations
  admitting a hidden-variable model,'' \emph{Phys. Rev. A}, vol.~40, no.~8, pp.
  4277--4281, Oct 1989.

\bibitem{Popescu-1997a}
S.~Popescu and D.~Rohrlich, ``Thermodynamics and the measure of entanglement,''
  \emph{Phys. Rev. A}, vol.~56, pp. R3319--R3321, Nov 1997.

\bibitem{Plenio-2007a}
M.~B. Plenio and S.~Virmani, ``An introduction to entanglement measures,''
  \emph{Quant. Inf. Comput.}, vol.~7, no. 1\&2, pp. 1--51, Jan 2007.

\bibitem{Ghosh-2001a}
S.~Ghosh, G.~Kar, A.~Roy, A.~Sen(De), and U.~Sen, ``Distinguishability of bell
  states,'' \emph{Phys. Rev. Lett.}, vol.~87, p. 277902, Dec 2001.

\bibitem{Ghosh-2004a}
S.~Ghosh, G.~Kar, A.~Roy, and D.~Sarkar, ``Distinguishability of maximally
  entangled states,'' \emph{Phys. Rev. A}, vol.~70, p. 022304, Aug 2004.

\bibitem{Walgate-2002a}
J.~Walgate and L.~Hardy, ``Nonlocality, {A}symmetry, and {D}istinguishing
  {B}ipartite {S}tates,'' \emph{Phys. Rev. Lett.}, vol.~89, no.~14, p. 147901,
  2002.

\bibitem{Horodecki-2003a}
M.~Horodecki, A.~Sen(De), U.~Sen, and K.~Horodecki, ``Local
  {I}ndistinguishability: {M}ore {N}onlocality with {L}ess {E}ntanglement,''
  \emph{Phys. Rev. Lett.}, vol.~90, no.~4, p. 047902, Jan 2003.

\bibitem{Nathanson-2005a}
M.~Nathanson, ``Distinguishing {B}ipartitite {O}rthogonal {S}tates using
  {LOCC}: {B}est and {W}orst {C}ases,'' \emph{J. Math. Phys.}, vol.~46, p.
  062103, 2005.

\bibitem{Watrous-2005b}
J.~Watrous, ``Bipartite subspaces having no bases distinguishable by local
  operations and classical communication,'' \emph{Phys. Rev. Lett.}, vol.~95,
  p. 080505, Aug 2005.

\bibitem{Feng-2009a}
Y.~Feng and Y.~Shi, ``Characterizing locally indistinguishable orthogonal
  product states,'' \emph{Information Theory, IEEE Transactions on}, vol.~55,
  no.~6, pp. 2799--2806, 2009.

\bibitem{Yu-2010a}
N.~Yu, R.~Duan, and M.~Ying, ``Four locally indistinguishable ququad-ququad
  orthogonal maximally entangled states,'' \emph{Phys. Rev. Lett.}, vol. 109,
  p. 020506, Jul 2012.

\bibitem{Duan-2010a}
R.~Duan, Y.~Xin, and M.~Ying, ``Locally indistinguishable subspaces spanned by
  three-qubit unextendible product bases,'' \emph{Phys. Rev. A}, vol.~81, p.
  032329, Mar 2010.

\bibitem{Cosentino-2013a}
A.~Cosentino, ``Positive-partial-transpose-indistinguishable states via
  semidefinite programming,'' \emph{Phys. Rev. A}, vol.~87, p. 012321, Jan
  2013.

\bibitem{Terhal-2001a}
B.~M. Terhal, D.~P. DiVincenzo, and D.~W. Leung, ``Hiding bits in {B}ell
  states,'' \emph{Phys. Rev. Lett.}, vol.~86, no.~25, pp. 5807--5810, Jun 2001.

\bibitem{DiVincenzo-2002a}
D.~P. DiVincenzo, D.~W. Leung, and B.~M. Terhal, ``Quantum data hiding,''
  \emph{IEEE Trans. Inf. Theory}, vol.~48, no.~3, pp. 580--598, Mar 2002.

\bibitem{Hillery-1999a}
M.~Hillery, V.~Bu\v{z}ek, and A.~Berthiaume, ``Quantum {S}ecret {S}haring,''
  \emph{Phys. Rev. A}, vol.~59, p. 1829, 1999.

\bibitem{Gottesman-2000a}
D.~Gottesman, ``Theory of quantum secret sharing,'' \emph{Phys. Rev. A},
  vol.~61, p. 042311, Mar 2000.

\bibitem{Walgate-2000a}
J.~Walgate, A.~J. Short, L.~Hardy, and V.~Vedral, ``Local {D}istinguishability
  of {M}ultipartite {O}rthogonal {Q}uantum {S}tates,'' \emph{Phys. Rev. Lett.},
  vol.~85, no.~23, pp. 4972--4975, 2000.

\bibitem{Virmani-2001a}
S.~Virmani, M.~F. Sacchi, M.~B. Plenio, and D.~Markham, ``Optimal local
  discrimination of two multipartite pure states,'' \emph{Phys. Lett. A}, vol.
  288, p.~62, 2001.

\bibitem{Chen-2002a}
Y.-X. Chen and D.~Yang, ``Optimally conclusive discrimination of nonorthogonal
  entangled states by local operations and classical communications,''
  \emph{Phys. Rev. A}, vol.~65, p. 022320, Jan 2002.

\bibitem{Ji-2005b}
Z.~Ji, H.~Cao, and M.~Ying, ``Optimal conclusive discrimination of two states
  can be achieved locally,'' \emph{Phys. Rev. A}, vol.~71, p. 032323, Mar 2005.

\bibitem{Bennett-1992b}
C.~H. Bennett and S.~J. Wiesner, ``Communication {V}ia {O}ne- and
  {T}wo-{P}article {O}perators on {E}instein-{P}odolsky-{R}osen states,''
  \emph{Phys. Rev. Lett.}, vol.~69, no.~20, pp. 2881--2884, Nov 1992.

\bibitem{Bennett-1999a}
C.~H. Bennett, D.~P. DiVincenzo, C.~A. Fuchs, T.~Mor, E.~Rains, P.~W. Shor,
  J.~A. Smolin, and W.~K. Wootters, ``Quantum nonlocality without
  entanglement,'' \emph{Phys. Rev. A}, vol.~59, no.~2, pp. 1070--1091, Feb
  1999.

\bibitem{Peres-1991a}
A.~Peres and W.~K. Wootters, ``Optimal detection of quantum information,''
  \emph{Phys. Rev. Lett.}, vol.~66, no.~9, pp. 1119--1122, Mar 1991.

\bibitem{Acin-2005a}
A.~Ac\'\i{}n, E.~Bagan, M.~Baig, L.~Masanes, and R.~Mu\~noz Tapia,
  ``Multiple-copy two-state discrimination with individual measurements,''
  \emph{Phys. Rev. A}, vol.~71, p. 032338, Mar 2005.

\bibitem{Chitambar-2013b}
E.~Chitambar and M.-H. Hsieh, 2013, arXiv:1304.1555.

\bibitem{Chitambar-2012c}
E.~Chitambar, D.~Leung, L.~Man\v{c}inska, M.~Ozols, and A.~Winter, 2012,
  arXiv:1210.4583.

\bibitem{Wootters-1998a}
W.~K. Wootters, ``Entanglement of formation of an arbitrary state of two
  qubits,'' \emph{Phys. Rev. Lett.}, vol.~80, no.~10, pp. 2245--2248, Mar 1998.

\bibitem{Horodecki-1996a}
M.~Horodecki, P.~Horodecki, and R.~Horodecki, ``Separability of mixed states:
  necessary and sufficient conditions,'' \emph{Phys. Lett. A}, vol. 223, no.
  1--2, pp. 1--8, 1996.

\bibitem{Sanpera-1998a}
A.~Sanpera, G.~Vidal, and R.~Tarrach, ``Local description of quantum
  inseparability,'' \emph{Phys. Rev. A}, vol.~58, p. 826, 1998.

\bibitem{Duan-2007a}
R.~Duan, Y.~Feng, Y.~Xin, and M.~Ying, ``Distinguishability of quantum states
  by separable operations,'' \emph{IEEE Trans. Inf. Theory}, vol.~55, no.~3,
  pp. 1320--1330, Mar 2009.

\bibitem{Holevo-1973a}
A.~Holevo, ``Statistical decision theory for quantum systems,'' \emph{J.
  Multivar. Anal.}, vol.~3, no.~4, pp. 337 -- 394, 1973.

\bibitem{Yuen-1975a}
H.~Yuen, R.~Kennedy, and M.~Lax, ``Optimum testing of multiple hypotheses in
  quantum detection theory,'' \emph{IEEE Trans. Inf. Theory}, vol.~21, no.~2,
  pp. 125 -- 134, mar 1975.

\bibitem{Barnett-2009a}
S.~M. Barnett and S.~Croke, ``On the conditions for discrimination between
  quantum states with minimum error,'' \emph{J. Phys. A}, vol.~42, no.~6, p.
  062001, 2009.

\bibitem{Mochon-2006a}
C.~Mochon, ``Family of generalized ``pretty good'' measurements and the
  minimal-error pure-state discrimination problems for which they are
  optimal,'' \emph{Phys. Rev. A}, vol.~73, p. 032328, Mar 2006.

\bibitem{Hall-2003a}
B.~Hall, \emph{Lie Groups, Lie Algebras, and Representations: An Elementary
  Introduction}, ser. Graduate Texts in Mathematics.\hskip 1em plus 0.5em minus
  0.4em\relax Springer, 2003.

\bibitem{Ban-1997a}
M.~Ban, K.~Kurokawa, R.~Momose, and O.~Hirota, ``Optimum measurements for
  discrimination among symmetric quantum states and parameter estimation,''
  \emph{Int. J. Theor. Phys.}, vol.~36, pp. 1269--1288, 1997.

\bibitem{Koashi-2007a}
M.~Koashi, F.~Takenaga, T.~Yamamoto, and N.~Imoto, 2007.

\bibitem{Shor-2002a}
P.~Shor, ``The adaptive classical capacity of a quantum channel, or information
  capacities of three symmetric pure states in three dimensions,'' 2002,
  arXiv:quant-ph/0206058.

\bibitem{Hausladen-1994a}
P.~Hausladen and W.~K. Wootters, ``A ``pretty good'' measurement for
  distinguishing quantum states,'' \emph{J. Mod. Opt.}, vol.~41, no.~12, pp.
  2385--2390, 1994.

\bibitem{Chefles-1998b}
A.~Chefles, ``Unambiguous {D}iscrimination between {S}ets of {L}inearly
  {I}ndependent {Q}uantum states,'' \emph{Phys. Lett. A}, vol. 239, p. 339,
  1998.

\bibitem{Chefles-2004a}
------, ``Condition for unambiguous state discrimination using local operations
  and classical communication,'' \emph{Phys. Rev. A}, vol.~69, no.~5, p.
  050307, May 2004.

\bibitem{Duan-2007b}
R.~Duan, Y.~Feng, Z.~Ji, and M.~Ying, ``Distinguishing arbitrary multipartite
  basis unambiguously using local operations and classical communication,''
  \emph{Phys. Rev. Lett.}, vol.~98, p. 230502, Jun 2007.

\bibitem{Bandyopadhyay-2009a}
S.~Bandyopadhyay and J.~Walgate, ``Local distinguishability of any three
  quantum states,'' \emph{Journal of Physics A: Mathematical and Theoretical},
  vol.~42, no.~7, p. 072002, 2009.

\end{thebibliography}

\end{document}